\tikzstyle{node}=[fill=black, draw=black, shape=circle, scale=0.5]
\tikzstyle{wnode}=[fill=white, draw=black, shape=circle, scale=0.5]
\tikzstyle{textbox}=[inner sep=2pt, shape=rectangle, fill=none]
\tikzstyle{textnode}=[inner sep=0mm, shape=circle, fill=white]
\tikzstyle{gnode}=[inner sep=0mm, minimum size=1mm, fill={rgb,255: red,221; green,221; blue,221}, draw={rgb,255: red,221; green,221; blue,221}, shape=circle]
\tikzstyle{refine}=[fill=black, draw=black, shape=regular polygon, regular polygon sides=3, rotate=180, scale=0.5]
\tikzstyle{coarsen}=[fill=white, draw=black, shape=regular polygon, regular polygon sides=3, scale=0.5]
\tikzstyle{bdytextbox}=[fill=white, draw=black, shape=rectangle]
\tikzstyle{redbox}=[fill=white, draw=red, shape=rectangle, text=red]
\tikzstyle{bluecirc}=[inner sep=1mm, fill=white, draw={rgb,255: red,4; green,51; blue,255}, shape=circle, text={rgb,255: red,4; green,51; blue,255}]
\tikzstyle{rednode}=[fill={rgb,255: red,255; green,128; blue,128}, draw={rgb,255: red,255; green,128; blue,128}, shape=circle]
\tikzstyle{new style 0}=[fill=white, draw=red, shape=circle]
\tikzstyle{bluenode}=[fill={rgb,255: red,125; green,221; blue,255}, draw={rgb,255: red,123; green,202; blue,255}, shape=circle]
\tikzstyle{yellownode}=[fill={rgb,255: red,255; green,210; blue,75}, draw={rgb,255: red,255; green,210; blue,75}, shape=circle]
\tikzstyle{blacksq}=[fill=black, draw=black, shape=rectangle, scale=0.5]
\tikzstyle{bluetext}=[fill=none, draw=none, shape=rectangle, text={rgb,255: red,4; green,51; blue,255}]
\tikzstyle{reg}=[draw, fill=white, rounded rectangle, rounded rectangle left arc=none, minimum height=1em, minimum width=1em, node font={\scriptsize}]
\tikzstyle{coreg}=[draw, fill=white, rounded rectangle, rounded rectangle right arc=none, minimum height=1em, minimum width=1em, node font={\scriptsize}]
\tikzstyle{turquoisenode}=[fill={rgb,255: red,115; green,255; blue,239}, draw={rgb,255: red,115; green,255; blue,239}, shape=circle]
\tikzstyle{resistor}=[R]
\tikzstyle{inductor}=[L]
\tikzstyle{capacitor}=[C]
\tikzstyle{voltage-source}=[american voltage source]
\tikzstyle{current-source}=[american current source]
\tikzstyle{edge}=[-, draw=black]
\tikzstyle{diredge}=[->, draw=black]
\tikzstyle{dashed edge}=[-, dashed, dash pattern=on 1pt off 1.5pt, draw=black]
\tikzstyle{dirdash}=[->, dashed, dash pattern=on 2pt off 0.5pt, draw=black]
\tikzstyle{mapsto}=[{|->}, draw=black]
\tikzstyle{gray diredge}=[draw={rgb,255: red,221; green,221; blue,221}, ->]
\tikzstyle{dark grey dirdash}=[->, dashed, dash pattern=on 2pt off 0.5pt, draw={rgb,255: red,81; green,81; blue,81}]
\tikzstyle{doubedge}=[-, draw=black, double=none, double distance=3pt, inner sep=0pt, thick]
\tikzstyle{thedge}=[-, line width=1.5pt, draw=black]
\tikzstyle{gray dashed}=[-, dashed, dash pattern=on 1pt off 1.5pt, draw={rgb,255: red,128; green,128; blue,128}]
\tikzstyle{rededge}=[-, draw=red]
\tikzstyle{gray edge}=[-, draw={rgb,255: red,128; green,128; blue,128}]
\tikzstyle{blthedge}=[-, thick, draw={rgb,255: red,4; green,51; blue,255}]
\tikzstyle{blthdash}=[-, dashed, dash pattern=on 1pt off 1.5pt, thick, draw={rgb,255: red,4; green,51; blue,255}]
\tikzstyle{dirrededge}=[draw=red, ->]
\tikzstyle{object}=[inner sep=0mm, shape=circle, fill=none]
\tikzstyle{bullet}=[fill=black, draw=black, shape=circle, scale=0.3]
\tikzstyle{circ}=[fill=white, draw=black, shape=circle, scale=0.3]
\tikzstyle{objectbox}=[inner sep=3pt, shape=rectangle, fill=none]
\tikzstyle{bdyobjectbox}=[fill=white, draw=black, shape=rectangle]
\tikzstyle{morphism}=[->, draw=black]
\tikzstyle{dash morphism}=[->, dashed, dash pattern=on 1.5pt off 1pt, draw=black]
\tikzstyle{mapsto}=[{|->}, draw=black]
\tikzstyle{nat transf}=[-implies, double, double distance=3pt, thick]
\tikzstyle{gray nat transf}=[-implies, draw=gray, double, double distance=3pt, thick]
\tikzstyle{equality}=[-, double, double distance=3pt]
\tikzstyle{squig morphism}=[rightsquigarrow, draw=black]
\tikzstyle{hookarrow}=[right hook->, draw=black]
\tikzstyle{hookarrowmirror}=[left hook->, draw=black]
\title{A Categorical Model for Retrosynthetic Reaction Analysis}
\author{
Ella Gale \footnote{University of Bristol, UK \texttt{ella.gale@bristol.ac.uk}} \and
Leo Lobski \footremember{ucl}{University College London, UK \texttt{leo.lobski.21@ucl.ac.uk f.zanasi@ucl.ac.uk}} \and
Fabio Zanasi \footrecall{ucl} \footnote{University of Bologna, Italy}
}
\date{}
\begin{document}
\maketitle

\begin{abstract}
We introduce a mathematical framework for retrosynthetic analysis, an important research method in synthetic chemistry. Our approach represents molecules and their interaction using string diagrams in layered props -- a recently introduced categorical model for partial explanations in scientific reasoning. Such principled approach allows one to model features currently not available in automated retrosynthesis tools, such as chirality, reaction environment and protection-deprotection steps.
\end{abstract}

\section{Introduction}

A chemical reaction can be understood as a rule which tells us what the outcome molecules (or molecule-like objects, such as ions) are when several molecules are put together. If, moreover, the reaction records the precise proportions of the molecules as well as the conditions for the reaction to take place (temperature, pressure, concentration, presence of a solvent etc.), it can be seen as a precise scientific prediction, whose truth or falsity can be tested in a lab, making the reaction reproducible. Producing complicated molecules, as required e.g.~by the pharmaceutical industry, requires, in general, a chain of several consecutive reactions in precisely specified conditions. The general task of synthetic chemistry is to come up with reproducible reaction chains to generate previously unknown molecules (with some desired properties)~\cite{organic-synthesis}. Successfully achieving a given synthetic task requires both understanding of the chemical mechanisms and the empirical knowledge of existing reactions. Both of these are increasingly supported by computational methods~\cite{strieth2020machine}: rule-based and dynamical models are used to suggest potential reaction mechanisms, while database search is used to look for existing reactions that would apply in the context of interest~\cite{compuuter-aided2022}. The key desiderata for such tools are tunability and specificity. Tunability endows a synthetic chemist with tools to specify a set of goals (e.g.~adding or removing a functional group\footnote{Part of a molecule that is known to be responsible for certain chemical function.}), while by specificity we mean maximising yield and minimising side products.

In this paper, we focus on the area of synthetic chemistry known as {\em retrosynthesis}~\cite{corey1988robert,compuuter-aided2022,warren1991designing}. While reaction prediction asks what reactions will occur and what outcomes will be obtained when some molecules are allowed to interact, retrosynthesis goes backwards: it starts with a target molecule that we wish to produce, and it proceeds in the ``reverse'' direction by asking what potential reactants would produce the target molecule. While many automated tools for retrosynthesis exist (see e.g.~\cite{route-designer,Coley2017-similarity,coley2018machine,Lin2020,Chen2021,ucak2022,dong2022deep}), there is no uniform mathematical framework in which the suggested algorithms could be analysed, compared or combined. The primary contribution of this paper is to provide such a framework. By formalising the methodology at this level of mathematical generality, we are able to provide insights into how to incorporate features that the current automated retrosynthesis tools lack: these include modelling chirality, the reaction environment, and the protection-deprotection steps (see for example~\cite{filice2010recent}), which are all highly relevant to practical applications. Our formalism, therefore, paves the way for new automated retrosynthesis tools, accounting for the aforementioned features.

Mathematically, our approach is phrased in the algebraic formalism of {\em string diagrams}~\cite{piedeleu-zanasi}, and most specifically uses {\em layered props}. Layered props were originally introduced, in~\cite{lobski-zanasi}, as models for systems that have several interdependent levels of description. In the context of chemistry, the description levels play a threefold role: first, each level represents a reaction environment, second, the morphisms in different levels are taking care of different synthetic tasks, and third, the rules that are available in a given level reflect the structure that is deemed relevant for the next retrosynthetic step. The latter can be seen as a kind of coarse-graining, where by deliberately restricting to a subset of all available information, we reveal some essential features about the system. Additionally, organising retrosynthetic rules into levels allows us to include conditions that certain parts of a molecule are to be kept intact. While the presentation here is self-contained and, in particular, does not assume a background on layered props, we emphasise that our approach is principled in the sense that many choices we make are suggested by this more general framework. We point such choices out when we feel the intuition that comes from layered props is helpful for understanding the formalism presented in the present work.

The rest of the paper is structured as follows. In Section~\ref{sec:chem-background}, we give a brief overview of the methodology of retrosynthetic analysis, as well as of the existing tools for automating it. Section~\ref{sec:layered-props} recalls the conceptual and mathematical ideas behind layered props. The entirety of Section~\ref{sec:chem-graphs} is devoted to constructing the labelled graphs that we use to represent molecular entities: these will be the objects of the monoidal categories we introduce in Sections~\ref{sec:reactions} and~\ref{sec:retrosynthesis}. Section~\ref{sec:disc-rules} formalises retrosynthetic disconnection rules, while Section~\ref{sec:reactions} formalises reactions. The culmination of the paper is the layered prop defined in Section~\ref{sec:retrosynthesis}, where we also describe how to reason about retrosynthesis within it. In Section~\ref{sec:discussion} we sketch the prospects of future work.

This work extends the conference paper {\em A Categorical Approach to Synthetic Chemistry} (ICTAC 2023) with new material in Sections~\ref{sec:reactions} and~\ref{sec:retrosynthesis}. We have included more detailed constructions, proofs of the results, as well as many examples to illustrate the definitions. The other sections are essentially unchanged from the conference version.

\section{Retrosynthetic Analysis}\label{sec:chem-background}

Retrosynthetic analysis starts with a target molecule we wish to produce but do not know how. The aim is to ``reduce'' the target molecule to known (commercially available) outcome molecules in such a way that when the outcome molecules react, the target molecule is obtained as a product. This is done by (formally) partitioning the target molecule into functional parts referred to as {\em synthons}, and finding actually existing molecules that are chemically equivalent to the synthons; these are referred to as {\em synthetic equivalents}~\cite{logic-chemical,organic-synthesis,organic-chemistry}. If no synthetic equivalents can be found that actually exist, the partitioning step can be repeated, this time using the synthetic equivalents themselves as the target molecules, and the process can continue until either known molecules are found, or a maximum number of steps is reached and the search is stopped. Note that the synthons themselves do not refer to any molecule as such, but are rather a convenient formal notation for parts of a molecule. For this reason, passing from synthons to synthetic equivalents is a non-trivial step involving intelligent guesswork and chemical know-how of how the synthons {\em would} react if they were independent chemical entities.
\begin{figure}
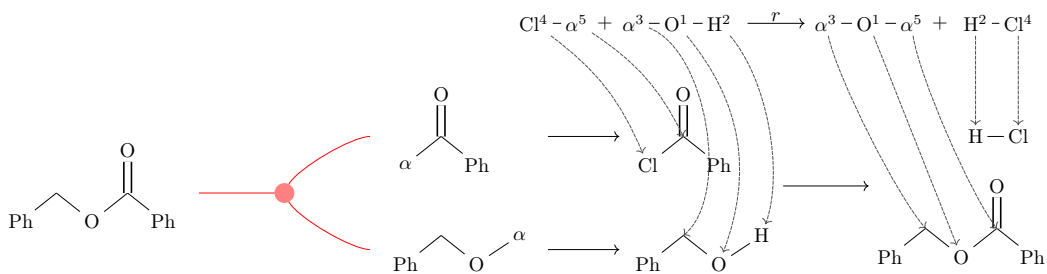

  \centering
    \scalebox{.75}{\tikzfig{retro-example}}
  \caption{A retrosynthetic sequence \label{fig:clayden}}
\end{figure}

Clayden, Warren and Greeves~\cite{organic-chemistry} give the example in Figure~\ref{fig:clayden} when introducing retrosynthesis. Here the molecule on the left-hand side is the target, the resulting two parts with the symbol $\alpha$ are the synthons. We use the symbol $\alpha$ to indicate where the cut has been made, and hence which atoms have unpaired electrons. Replacing the symbols $\alpha$ in the synthons with $\mathtt{Cl}$ and $\mathtt{H}$, we obtain the candidate synthetic equivalents shown one step further to the right. Assuming existence of the reaction scheme $r$ shown at the top, it can be shown that there is a reaction starting from the synthetic equivalents and ending with the target. This is the simplest possible instance of a retrosynthetic sequence. In general, the interesting sequences are much longer, and, importantly, contain information under what conditions the reactions will take place.

\paragraph{Existing tools.}
Many tools for automatic retrosynthesis have been successfully developed starting from the 1960s~\cite{route-designer,Coley2017-similarity,Lin2020,Chen2021,ucak2022}. They can be divided into two classes~\cite{compuuter-aided2022}: {\em template-based}~\cite{fortunato2020data,yan2022retrocomposer} and {\em template-free}~\cite{Lin2020,somnath2020learning}. Template-based tools contain a rule database (the {\em template}), which is either manually encoded or automatically extracted. Given a molecule represented as a graph, the model checks whether any rules are applicable to it by going through the database and comparing the conditions of applying the rule to the subgraphs of the molecule~\cite{compuuter-aided2022}. Choosing the order in which the rules from the template and the subgraphs are tried are part of the model design. Template-free tools, on the other hand, are data-driven and treat the retrosynthetic rule application as a translation between graphs or their representations as strings: the suggested transforms are based on learning from known transforms, avoiding the need for a database of rules~\cite{compuuter-aided2022,ucak2022}.

While successful retrosynthesic sequences have been predicted by the computational retrosynthesis tools, they lack a rigorous mathematical foundation, which makes them difficult to compare, combine or modify. Other common drawbacks of the existing approaches include not including the reaction conditions or all cases of chirality as part of the reaction template~\cite{compuuter-aided2022,Lin2020}, as well as the fact that the existing models are unlikely to suggest protection-deprotection steps. Additionally, the template-free tools based on machine learning techniques sometimes produce output that does not correspond to molecules in any obvious way, and tend to reproduce the biases present in the literature or a data set~\cite{compuuter-aided2022}.

For successful prediction, the reaction conditions are, of course, crucial. These include such factors as temperature and pressure, the presence of a solvent (a compound which takes part in the reaction and whose supply is essentially unbounded), the presence of a reagent (a compound without which the reaction would not occur, but which is not the main focus or the target), as well as the presence of a catalyst (a compound which increases the rate at which the reaction occurs, but is itself unaltered by the reaction). The above factors can change the outcome of a reaction dramatically~\cite{matwijczuk2017effect,cook2007determination}. There have indeed been several attempts to include reaction conditions into the forward reaction prediction models~\cite{marcou2015,gao2018,walker2019,maser2021}. However, the search space in retrosynthesis is already so large that adding another search criterion should be done with caution. A major challenge for predicting reaction conditions is that they tend to be reported incompletely or inconsistently in the reaction databases~\cite{coley2017-outcomes}.

Chirality (mirror-image asymmetery) of a molecule can alter its chemical and physiological properties, and hence constitutes a major part of chemical information pertaining to a molecule. While template-based methods have been able to successfully suggest reactions involving chirality (e.g.~\cite{Coley2017-similarity}), the template-free models have difficulties handling it~\cite{Lin2020}. This further emphasises usefulness of a framework which is able to handle both template-based and template-free models.

The protection-deprotection steps are needed when more than one functional group of a molecule $A$ would react with a molecule $B$. To ensure the desired reaction, the undesired functional group of $A$ is first ``protected'' by adding a molecule $X$, which guarantees that the reaction product will react with $B$ in the required way. Finally, the protected group is ``deprotected'', producing the desired outcome of $B$ reacting with the correct functional group of $A$. So, instead of having a direct reaction $A+B\rightarrow C$ (which would not happen, or would happen imperfectly, due to a ``competing'' functional group), the reaction chain is:
$$\text{(1) } A + X \rightarrow A'\text{ (protection)},\quad\text{(2) } A' + B \rightarrow C',\quad\text{(3) } C' + Y \rightarrow C\text{ (deprotection).}$$
The trouble with the protection-deprotection steps is that they temporarily make the molecule larger, which means that an algorithm whose aim is to make a molecule smaller will not suggest them.

\section{Layered Props}\label{sec:layered-props}

{\em Layered props} were introduced in~\cite{lobski-zanasi} as categorical models for diagrammatic reasoning about systems with several levels of description. They have been employed to account for partial explanations and semantic analysis in the context of electrical circuit theory, chemistry, and concurrency. Formally, a layered prop is essentially a functor $\Omega:P\rightarrow\StrMon$ from a poset $P$ to the category of strict monoidal categories, together with a right adjoint for each monoidal functor in the image of $\Omega$. Given $\omega\in P$, we denote a morphism $\sigma:a\rightarrow b$ in $\Omega(\omega)$ by the box on the
\begin{wrapfigure}[3]{r}{0.2\textwidth}
    \vspace{-8pt}
    \centering
    \scalebox{.8}{\tikzfig{internalsigmadiag}}
\end{wrapfigure}
right. We think of $\sigma$ as a {\em process} with an input $a$ and an output $b$ happening in the {\em context} $\omega$. Note, however, that these diagrams are not merely a convenient piece of notation that capture our intuition: they are a completely formal syntax of string diagrams, describing morphisms in a certain subcategory of pointed profunctors~\cite{lobski-zanasi}.

The monoidal categories in the image of $\Omega$ are thought of as languages describing the same system at different levels of granularity, and the functors are seen as translations between the languages. Given $\omega\leq\tau$ in $P$, let us write $f\coloneqq\Omega(\omega\leq\tau)$. Then, for each $a\in\Omega(\omega)$ we have the
\begin{wrapfigure}[7]{r}{0.2\textwidth}
    \vspace{-14pt}
    \centering
    \scalebox{.8}{\tikzfig{refine-sheet}} \\
    \vspace{6pt}
    \scalebox{.8}{\tikzfig{coarsen-sheet}}
\end{wrapfigure}
morphisms drawn on the right. The reason for having morphisms in both directions is that we want to be able to ``undo'' the action of a translation while preserving a linear reasoning flow. The two morphisms will not, in general, be inverse to each other: rather, they form an adjoint pair. This corresponds to the intuition that some information is gained by performing the translation, and that the translation in the reverse direction is our best guess, or an approximation, not a one-to-one correspondence.

There are two ways to compose morphisms in parallel in a layered prop: internally within a monoidal category $\Omega(\omega)$ using its own monoidal product (composition inside a context), and externally using the Cartesian monoidal structure of $\StrMon$ (doing several processes in different contexts in parallel). We represent the latter by stacking the boxes on top of each other. Additional morphisms of a layered prop ensure that the internal and the external monoidal structures interact in a coherent way. Finally, a layered prop comes with ``deduction rules'' (2-cells) which allow transforming one process into another one. We refer the reader to~\cite{lobski-zanasi} for the details.

In this work, the processes in context will be the retrosynthetic disconnection rules (Section~\ref{sec:disc-rules}) and the chemical reactions (Section~\ref{sec:reactions}). The context describes the reaction environment as well as the level of granularity at which the synthesis is happening (i.e.~what kinds of disconnection rules are available). The objects in the monoidal categories are given by molecular entities and their parts: this is the subject of the next section.

\section{Chemical Graphs}\label{sec:chem-graphs}
We define a chemical graph as a labelled graph whose edge labels indicate the bond type (covalent, ionic), and whose vertex labels are either atoms, charges or unpaired electrons (Definitions~\ref{def:prechemgraph} and~\ref{def:chemgraph}). In order to account for chirality, we add spatial information to chemical graphs, making it an {\em oriented (pre-)chemical graph} (Definition~\ref{def:orient-molpart}).

Oriented chemical graphs form the objects of the layered props we suggest as a framework for synthetic chemistry. The morphisms of these layered prop correspond to retrosynthetic disconnection rules and chemical reactions; this is the topic of the next two sections.

Let us define the set of {\em atoms} as containing the symbol for each main-group element of the periodic table: $\At\coloneqq\{H,C,O,P,\dots\}$. Define the function $\mathbf v:\At\sqcup\{+,-,\alpha\}\rightarrow\N$ as taking each element symbol to its valence\footnote{This is a bit of a naive model, as valence is, in general, context-sensitive. We leave accounting for this to future work.}, and define $\mathbf v(-)=\mathbf v(+)=\mathbf v(\alpha)=1$, where $+$ and $-$ stand for positive and negative charge, while $\alpha$ denotes an unpaired electron. Let $\Lab\coloneqq\{0,1,2,3,4,\ib\}$ denote the set of {\em edge labels}, where the integers stand for a covalent bond, and $\ib$ for an ionic bond. We further define maps $\cov,\ion:\Lab\rightarrow\N$: for $\cov$, assign to each edge label $0$, $1$, $2$, $3$, and $4$ the corresponding natural number and let $\ib\mapsto 0$, while for $\ion$, let $0,1,2,3,4\mapsto 0$ and $\ib\mapsto 1$. Finally, let us fix a countable set $\VS$ of {\em vertex names}; we usually denote the elements of $\VS$ by lowercase Latin letters $u,v,w,\dots$.

\begin{definition}[Pre-chemical graph]\label{def:prechemgraph}
A {\em pre-chemical graph} is a triple $(V,\tau,m)$, where $V\sse\VS$ is a finite set of {\em vertices}, $\tau:V\rightarrow\At\sqcup\{+,-,\alpha\}$ is a {\em vertex labelling function}, and $m:V\times V\rightarrow\Lab$ is an {\em edge labelling function} satisfying $m(v,v)=0$ and $m(v,w)=m(w,v)$ for all $v,w\in V$.
\end{definition}
Thus, a pre-chemical graph is irreflexive (we interpet the edge label $0$ as no edge) and symmetric, and each of its vertices is labelled with an atom, a charge or a placeholder variable $\alpha$. Given a pre-chemical graph $A$, we write $(V_A,\tau_A,m_A)$ for its vertex set and the labelling functions. Further, we define the following special subsets of vertices:
\begin{itemize}
\item {\em $\alpha$-vertices}, whose label is the placeholder variable: $\alpha(A)\coloneqq\tau^{-1}(\alpha)$,
\item {\em chemical vertices}, whose label is either an atom or a charge: $\Chem A\coloneqq V_A\setminus\alpha(A)$,
\item {\em charged vertices}, whose label is a charge: $\Crg A\coloneqq\tau^{-1}(\{+,-\})$,
\item {\em neutral vertices}, whose label is an atom or the placeholder variable: $\Neu A\coloneqq V_A\setminus\Crg A$.
\end{itemize}
For a pre-chemical graph $A$, its {\em net charge} is the integer $|\tau_A^{-1}(+)|-|\tau_A^{-1}(-)|$ given by subtracting the number of vertices with label $-$ from the number of $+$-labelled vertices.

Note that the collection of pre-chemical graphs has a partial monoid structure given by the disjoint union of labelled graphs, provided that the vertex sets are disjoint.
\begin{definition}[Chemical graph]\label{def:chemgraph}
A {\em chemical graph} $(V,\tau,m)$ is a pre-chemical graph satisfying the following additional conditions:
\begin{enumerate}
\item for all $v\in V$, we have $\sum_{u\in V}\cov\left(m(u,v)\right)=\mathbf v\tau(v)$,\label{cgraph:val}
\item for all $v,w\in V$ with $\tau(v)=\alpha$ and $m(v,w)=1$, we have $\tau(w)\in\At\sqcup\{-\}$,\label{cgraph:alpha}
\item if $v,w\in V$ such that $\tau(v)\in\{+,-\}$ and $m(v,w)=1$, then $\tau(w)\in\At\sqcup\{\alpha\}$,\label{cgraph:charge}
\item if $m(v,w)=\ib$, then 
\begin{enumerate}
\item $\tau(v),\tau(w)\in\{+,-\}$ and $\tau(v)\neq\tau(w)$,\label{cgraph:ion1}
\item for $a,b\in V$ with $m(v,a)=m(w,b)=1$, we have $\tau(a),\tau(b)\in\At$,\label{cgraph:ion2}
\item if for some $w'\in V$ we have $m(v,w')=\ib$, then $w=w'$.\label{cgraph:ion3}
\end{enumerate}
\end{enumerate}
\end{definition}
Condition~\ref{cgraph:val} says that the sum of each row or column in the adjacency matrix formed by the the integers $\cov\left(m(u,v)\right)$ gives the valence of the (label of) corresponding vertex. Conditions~\ref{cgraph:alpha} and~\ref{cgraph:charge} say that a vertex labelled by $\alpha$, $+$ or $-$ has to be connected to an atom, with the exception that the vertices labelled $\alpha$ and $-$ are allowed to be connected to each other instead of atoms. Finally, conditions~\ref{cgraph:ion1}-\ref{cgraph:ion3} say that an edge with label $\ib$ only connects vertices labelled with opposite charges ($+$ and $-$) that are themselves connected to atoms, such that each charge-labelled vertex is connected to at most one other such vertex.

A {\em synthon} is a chemical graph which is moreover connected. The collection of chemical graphs is, therefore, generated by the disjoint unions of synthons. A {\em molecular graph} is a chemical graph with no $\alpha$-vertices. A {\em molecular entity} is a connected molecular graph.

When drawing a chemical graph, we simply replace the vertices by their labels, unless the precise vertex names play a role. We adopt the usual chemical notation for $n$-ary bonds by drawing them as $n$ parallel lines. The ionic bonds are drawn as dashed lines.

\begin{example}\label{ex:partition}
We give examples of a synthon on the left, and two moleculear entities on the right: a molecule (ethenone) and an ion (carbonate anion).
\begin{center}
\begin{tabular}{c c c}
\scalebox{.85}{\tikzfig{example-partition}} \qquad & \scalebox{.85}{\tikzfig{example-molecule}} \qquad & \scalebox{.85}{\tikzfig{ion-example}}
\end{tabular}
\end{center}
\end{example}

\subsection{Chirality}
Next, we introduce (rudimentary) spatial information into (pre-)chemical graphs. The idea is to record for each triple of atoms whether they are on the same line or not, and similarly, for each quadruple of atoms whether they are in the same plane or not.
  \begin{definition}[Triangle relation]\label{def:plane-rel}
    Let $S$ be a set. We call a ternary relation $\mathcal P\sse S\times S\times S$ a {\em triangle relation} if the following hold for all elements $A$, $B$ and $C$ of $S$: (1) $ABB\notin\mathcal P$, and (2) if $\mathcal P(ABC)$ and $\mathfrak p(ABC)$ is any permutation of the three elements, then $\mathcal P(\mathfrak p(ABC))$.
  \end{definition}
  \begin{definition}[Tetrahedron relation]\label{def:tet-rel}
    Let $S$ be a set, and let $\mathcal P$ be a fixed triangle relation on $S$. We call a quaternary relation $\mathcal T\sse S\times S\times S\times S$ a {\em tetrahedron relation} if the following hold for all elements $A$, $B$, $C$ and $D$ of $S$: (1) if $\mathcal T(ABCD)$, then $\mathcal P(ABC)$, and (2) if $\mathcal T(ABCD)$ and $\mathfrak p(ABCD)$ is any even permutation of the four elements, then $\mathcal T(\mathfrak p(ABCD))$.
  \end{definition}
Unpacking the above definitions, a triangle relation is closed under the action of the symmetric
\begin{wrapfigure}[13]{r}{0.25\textwidth}
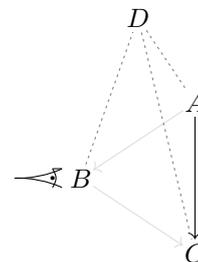

  \centering
    \vspace{-12pt}
    \tikzfig{observer}
  \caption{Observer looking at the edge $AC$ from $B$ sees $D$ on their right. \label{fig:observer}}
\end{wrapfigure}
group $S_3$ such that any three elements it relates are pairwise distinct, and a tetrahedron relation is closed under the action of the alternating group $A_4$ such that if it relates some four elements, then the first three are related by some (fixed) triangle relation (this, inter alia, implies that any related elements are pairwise distinct, and their any $3$-element subset is related by the fixed triangle relation).

The intuition is that the triangle and tetrahedron relations capture the spatial relations of (not) being on the same line or plane: $\mathcal P(ABC)$ stands for $A$, $B$ and $C$ not being on the same line, that is, determining a triangle; similarly, $\mathcal T(ABCD)$ stands for $A$, $B$, $C$ and $D$ not being in the same plane, that is, determining a tetrahedron. The tetrahedron is moreover oriented: $\mathcal T(ABCD)$ does not, in general, imply $\mathcal T(DABC)$. We visualise $\mathcal T(ABCD)$ in Figure~\ref{fig:observer} by placing an ``observer'' at $B$ who is looking at the edge $AC$ such that $A$ is above $C$ for them. Then $D$ is on the right for this observer. Placing an observer in the same way in a situation where $\mathcal T(DABC)$ (which is equivalent to $\mathcal T(CBAD)$), they now see $D$ on their left.
\begin{remark}
We chose not to include the orientation of the triangle, which amounts to the choice of $S_3$ over $A_3$ in the definition of a triangle relation (Definition~\ref{def:plane-rel}). This is because we assume that our molecules float freely in space (e.g.~in a solution), so that there is no two-dimensional orientation.
\end{remark}

The following example demonstrates that the triangle and tetrahedron relations indeed capture triangles and tetrahedrons in the Euclidean setting.
\begin{example}
Let us define the triangle relation $\mathcal P$ on the $3$-dimensional Euclidean space $\R^3$ by letting $\mathcal P(abc)$ if and only if $(b-a)\times (c-a)\neq 0$, where $\times$ denotes the vector product. We then have $\mathcal P(abc)$ precisely when $c$ does not lie on the line determined by $a$ and $b$, that is, when the three points uniquely determine a plane in $\R^3$.

With respect to the above triangle relation, let us define the tetrahedron relation $\mathcal T$ by letting $\mathcal T(abcd)$ if and only if $\overline{(b-a)(c-a)(d-a)}>0$, where the bar denotes the scalar triple product. We then have $\mathcal T(abcd)$ precisely when the points $a$, $b$, $c$ and $d$ are vertices of a non-degenerate (a non-zero volume) tetrahedron in such a way that $d$ lies on that side of the plane determined by $a$, $b$ and $c$ to which the vector $(b-a)\times (c-a)$ points (see Figure~\ref{fig:observer}).
\end{example}

\begin{definition}[Oriented pre-chemical graph]\label{def:orient-molpart}
An {\em oriented pre-chemical graph} is a tuple $(V,\tau,m,\mathcal P,\mathcal T)$ where $(V,\tau,m)$ is a pre-chemical graph, $\mathcal P$ is a triangle relation on $V$ and $\mathcal T$ is a tetrahedron relation on $V$ with respect to $\mathcal P$, such that for all $a,b,c\in V$ we have: if $\mathcal P(abc)$, then $a,b,c\in\Neu V$, and moreover at most one of $a$, $b$ and $c$ is in $\alpha(V)$.
\end{definition}
An {\em oriented chemical graph} is an oriented pre-chemical graph, which is also a chemical graph.
\begin{definition}[Preservation and reflection of orientation]
Let $(M,\mathcal P_M,\mathcal T_M)$ and $(N,\mathcal P_N,\mathcal T_N)$ be oriented pre-chemical graphs, and let $f:M\rightarrow N$ be a labelled graph isomorphism. We say that $f$ {\em preserves orientation} (or is {\em orientation-preserving}) if for all vertices $A$, $B$, $C$ and $D$ of $M$ we have: (1) $\mathcal P_M(ABC)$ if and only if $\mathcal P_N(fA,fB,fC)$, and (2) $\mathcal T_M(ABCD)$ if and only if $\mathcal T_N(fA,fB,fC,fD)$.

Similarly, we say that $f$ {\em reflects orientation} (or is {\em orientation-reflecting}) if for all vertices $A$, $B$, $C$ and $D$ of $M$ we have: (1) $\mathcal P_M(ABC)$ if and only if $\mathcal P_N(fA,fB,fC)$, and (2) $\mathcal T_M(ABCD)$ if and only if $\mathcal T_N(fD,fA,fB,fC)$.
\end{definition}
\begin{definition}[Chirality]\label{def:chirality}
We say that two oriented pre-chemical graphs are {\em chiral} if there is an orientation-reflecting isomorphism, but no orientation-preserving isomorphism between them.
\end{definition}
\begin{example}\label{ex:2butanol}
Consider 2-butanol, whose molecular structure we draw in two different ways at the left of Figure~\ref{fig:example1}. Here we adopt the usual chemical convention for drawing spatial structure: a dashed wedge indicates that the bond points ``into the page'', and a solid wedge indicates that the bond points ``out of the page''. In this case, we choose to include the names of the vertices for some labels as superscripts. The spatial structure is formalised by defining the tetrahedron relation for the graph on the left-hand side as the closure under the action of $A_4$ of $\mathcal T(1234)$, and for the one on the right-hand side as (the closure of) $\mathcal T(4123)$. In both cases, the triangle relation is dictated by the tetrahedron relation, so that any three-element subset of $\{1,2,3,4\}$ is in the triangle relation. Now the identity map (on labelled graphs) reflects orientation. It is furthermore not hard to see that every isomorphism restricts to the identity on the vertices labelled with superscripts, so that there is no orientation-preserving isomorphism. Thus the two molecules are chiral according to Definition~\ref{def:chirality}.

By slightly modifying the structures, we obtain two configurations of isopentane, drawn at the right of Figure~\ref{fig:example1}. However, in this case we can find an orientation-preserving isomorphism (namely the one that swaps vertices $2$ and $4$), so that the molecules are not chiral.
\end{example}
\begin{figure}
  \centering
    \includegraphics[scale=0.23]{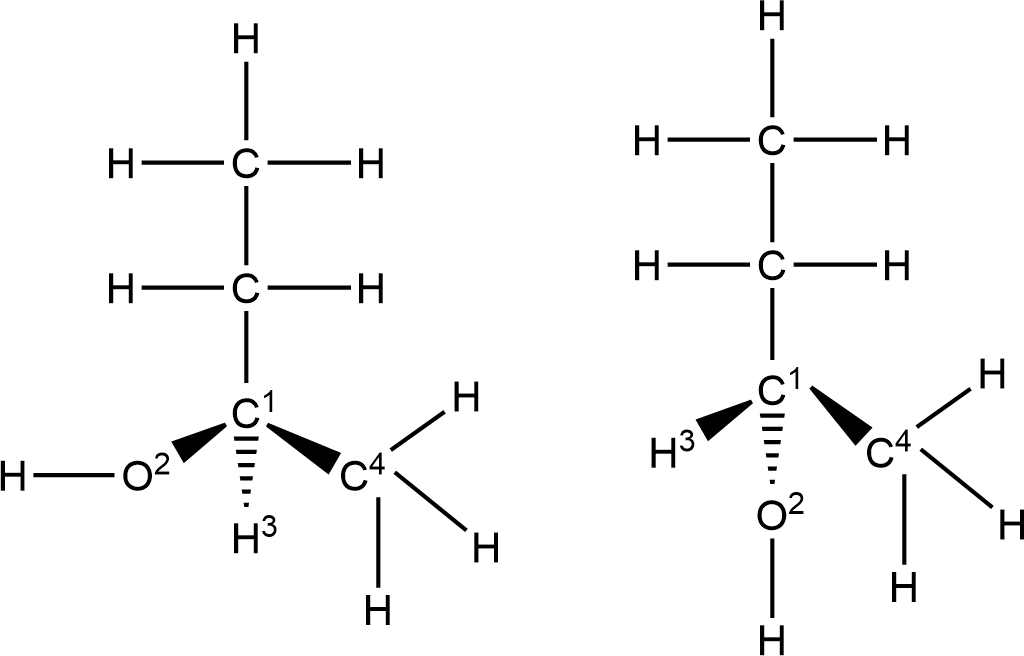} \qquad
    \includegraphics[scale=0.23]{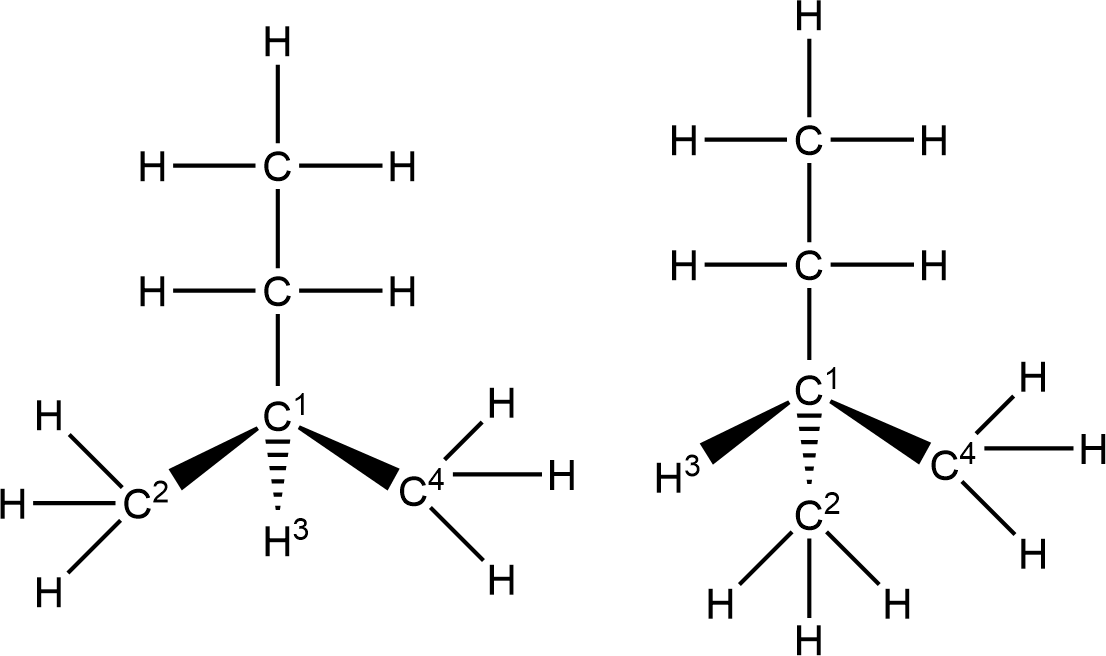}
    \caption{Left: two configurations of 2-butanol. Right: two configurations of isopentane. \label{fig:example1}}
\end{figure}
\begin{example}
Example~\ref{ex:2butanol} with 2-butanol demonstrated how to capture central chirality using Definition~\ref{def:chirality}. In this example, we consider 1,3-dichloroallene as an example of axial chirality. We draw two versions, as before:
\begin{center}
\includegraphics[scale=0.23]{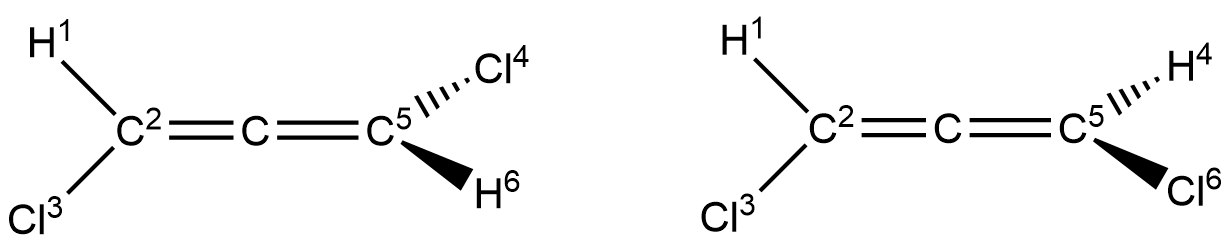}
\end{center}
The tetrahedron relation is generated by $\mathcal T(1234)$ and $\mathcal T(6123)$ for both molecules (note, however, that the vertices $4$ and $6$ have different labels). Now the isomorphism which swaps vertices $4$ and $6$ and is identity on all other vertices is orientation-reflecting, but not orientation-preserving. The only other isomorphism is $1\mapsto 4$, $2\mapsto 5$, $3\mapsto 6$, $4\mapsto 3$, $5\mapsto 2$, $6\mapsto 1$, which does not preserve orientation. Thus the two molecules are indeed chiral.
\end{example}

\section{Disconnection Rules}\label{sec:disc-rules}

The backbone of retrosynthetic analysis are the {\em disconnection rules} that partition the target molecule into smaller parts. Formally, a disconnection rule is a partial endofunction on the set of chemical graphs. We define three classes of disconnection rules, all of which have a clear chemical significance: {\em electron detachment} (Def.~\ref{def:edetach}), {\em ionic bond breaking} (Def.~\ref{def:iobond}) and {\em covalent bond breaking} (Def.~\ref{def:covbond}).

The above three classes of rules are chosen since they are used in the current retrosynthesis practice (e.g.~\cite{organic-synthesis,organic-chemistry}). However, once the reverse ``connection'' rules are added, we also conjecture that the rules are complete in the sense that every reaction (Def.~\ref{def:reaction-scheme}) can be decomposed into a sequence of disconnection rules. This conjecture comes with a caveat: the disconnection/connection rules cannot capture orientation, as they are local rewrite rules (acting on one edge at a time), while orientation contains global information about sets of vertices. For this reason, in this section we work with chemical graphs without an orientation.

\begin{definition}[Electron detachment]\label{def:edetach}
Let $u,v,a,b\in\VS$ be pairwise distinct vertex names. We define the {\em electron detachment} disconnection rule $E^{uv}_{ab}$ as follows:
\begin{itemize}
\item a chemical graph $A=(V,\tau,m)$ is in the domain of $E^{uv}_{ab}$ if (1) $u,v\in V$, (2) $a,b\notin V$, (3) $\tau(u)\in\At$, (4) $\tau(v)=\alpha$, and (5) $m(u,v)=1$,
\item the chemical graph $E^{uv}_{ab}(A)=(V\cup\{a,b\}, \tau^E, m^E)$ is defined by letting $\tau^E(a)=+$, $\tau^E(b)=-$ and letting $\tau^E$ agree with $\tau$ otherwise; further, define $m^E(u,v)=m^E(a,b)=0$, $m^E(u,a)=m^E(v,b)=1$ and let $m^E$ agree with $m$ otherwise.
\end{itemize}
\end{definition}
\begin{example}
The effect of the electron detachment is to detach an electron from a synthon, thus leaving it with a positive charge:
\begin{center}
\scalebox{.85}{\tikzfig{edetach-example}}.
\end{center}
\end{example}

\begin{definition}[Ionic bond breaking]\label{def:iobond}
Let $u,v\in\VS$ be distinct vertex names. We define the {\em ionic bond breaking} disconnection rule $I^{uv}$ as follows:
\begin{itemize}
\item a chemical graph $A=(V,\tau,m)$ is in the domain of $I^{uv}$ if (1) $u,v\in V$, (2) $\tau(u)=+$, (3) $\tau(v)=-$, and (4) $m(u,v)=\ib$,
\item the chemical graph $I^{uv}(A)=(V,\tau,m^I)$ is defined by letting $m^I(u,v)=0$, and letting $m^I$ agree with $m$ on all other vertices.
\end{itemize}
\end{definition}
\begin{example}
The effect of an ionic bond breaking is to remove an ionic bond between two specified charges:
\begin{center}
\scalebox{.85}{\tikzfig{ionbb}}.
\end{center}
\end{example}

\begin{definition}[Covalent bond breaking]\label{def:covbond}
Let $u,v,a,b\in\VS$ be pairwise distinct vertex names. We define the {\em covalent bond breaking} disconnection rule $C^{uv}_{ab}$ as follows:
\begin{itemize}
\item a chemical graph $A=(V,\tau,m)$ is in the domain of $C^{uv}_{ab}$ if (1) $u,v\in V$, (2) $a,b\notin V$, (3) $\tau(u),\tau(v)\in\At\sqcup\{-\}$, and (4) $m(u,v)\in\{1,2,3,4\}$,
\item the chemical graph $C^{uv}_{ab}(A)=(V\cup\{a,b\}, \tau^C, m^C, \mathcal P^C, \mathcal T^C)$ is defined by letting $\tau^C(a)=\tau^C(b)=\alpha$ and letting $\tau^C$ agree with $\tau$ on all other vertices; further, let $m^C(u,v)=m(u,v)-1$, $m^C(u,a)=m^C(v,b)=1$ and let $m^C$ agree with $m$ on all other vertices.
\end{itemize}
\end{definition}
\begin{example}
The effect of a covalent bond breaking is to reduce the number of electron pairs in a covalent bond by one. For a single bond this results in removing the bond altogether. We give two examples of this below:
\begin{center}
\begin{tabular}{c c}
\scalebox{.75}{\tikzfig{nfoldsat-example}} \hspace{20pt} & \scalebox{.75}{\tikzfig{nfoldpart-example}}
\end{tabular}
\end{center}
\end{example}

Observe that each disconnection rule defined above is injective (as a partial function), and hence has an inverse partial function.

\section{Reactions}\label{sec:reactions}

After a disconnection rule has been applied and candidate synthetic equivalents have been found, the next step in a retrosynthetic analysis is to find an existing reaction that would transform the (hypothetical) synthetic equivalents to the target compound. In this section, we give one possible formalisation of chemical reactions using double pushout rewriting. Our approach is very similar, and inspired by, that of Andersen, Flamm, Merkle and Stadler~\cite{inferring-rule-composition}, with some important differences, such as having more strict requirements on the graphs representing molecular entities, and allowing for the placeholder variable $\alpha$.

\subsection{Morphisms}
Reactions will be defined as bottom spans of certain double pushouts~\ref{def:reaction}. Here we define the category in which these pushouts live, as well as some constructions that will help us reason about spans in this category.

\begin{definition}[Morphism of oriented pre-chemical graphs]\label{def:morphism}
A {\em morphism} of oriented pre-chemical graphs $f:A\rightarrow B$ is a function $f:V_A\rightarrow V_B$ such that its restriction to the chemical vertices $f|_{\Chem A}$ is injective, and for all $v,u,a,b\in V_A$ we have
\begin{enumerate}
\item if $v\in\Chem A$, then $\tau_B(fv)=\tau_A(v)$,
\item if $v\in\alpha(A)$, then $\tau_B(fv)\in\At\sqcup\{\alpha\}$,
\item if $v,u\in\Chem A$ and $m_A(v,u)\neq 0$, then $m_B(fv,fu)=m_A(v,u)$,
\item if $v\in\alpha(A)$ and $\cov(m_A(v,u))\neq 0$, then
$$\cov(m_B(fv,fu))=\sum_{w\in f^{-1}f(v), z\in f^{-1}f(u)} \cov(m_A(w,z)),$$
\item if $\mathcal P_A(vua)$, then $\mathcal P_B(fv,fu,fa)$,
\item if $\mathcal T_A(vuab)$, then $\mathcal T_B(fv,fu,fa,fb)$.
\end{enumerate}
\end{definition}
\begin{example}\label{ex:morphism}
The idea of a morphism is that it preserves all the atoms, bonds, charges and spatial constraints present in the domain, potentially adding new ones. We give an example below, where we use superscripts to indicate the underlying function: each vertex in the domain is mapped to the vertex in the codomain with the same superscript:
\begin{center}
\scalebox{.85}{\tikzfig{morphism-example}}.
\end{center}
\end{example}

Let us denote by $\PChem$ the category of oriented pre-chemical graphs and their morphisms. This category has a partial monoidal structure given by the disjoint union: we can take the disjoint union of two morphisms provided that their domains as well as the codomains do not share vertex names. When the graphs are considered up to vertex renaming (as we shall do in the next section), this becomes an honest (strict) monoidal category.

A {\em labelled bijection} between oriented pre-chemical graphs $A$ and $C$ is a bijection $b:A\rightarrow C$ between the vertex sets which preserves the vertex labels. While a labelled bijection is not a morphism in $\PChem$, it can be understood as the span $A\hookleftarrow V_A\xrightarrow b C$, where $V_A$ is the oriented pre-chemical graph with the same vertices and vertex labels as $A$, but with all edge labels being $0$ and the triangle and tetrahedron relations being empty. The left leg of the span is the inclusion, while the right one is the bijection, now as a morphism in $\PChem$.
\begin{definition}[Intersection along a bijection]
Let $b:A\rightarrow C$ be a labelled bijection between oriented pre-chemical graphs. We define the {\em intersection along a bijection} as the oriented pre-chemical graph $A*_bC=(V_A,\tau_A,m,\mathcal P,\mathcal T)$, whose vertex set and labelling are those of $A$, and the edge labelling is defined by $m(a,a')\coloneqq m_A(a,a')$ if $m_A(a,a')=m_C(ba,ba')$ and $m(a,a')\coloneqq 0$ otherwise. The triangle and tetrahedron relations are defined by: $\mathcal P(xyz)$ if and only if $\mathcal P_A(xyz)$ and $\mathcal P_C(bx,by,bz)$, and similarly, $\mathcal T(xyzw)$ if and only if $\mathcal T_A(xyzw)$ and $\mathcal T_C(bx,by,bz,bw)$.
\end{definition}
We note that choosing the vertex set of $A*_bC$ to be $V_C$ would result in an isomorphic construction. The intersection $A*_bC$ is in fact the product of the inclusion $V_A\hookrightarrow A$ and $V_A\xrightarrow b C$ in the under category $V_A / \PChem_{inj}$, where $\PChem_{inj}$ is the wide subcategory of $\PChem$ of injective morphisms.

\subsection{Reaction Schemes}
The same reaction patterns are present in many individual reactions. A convenient way to represent this are spans whose left and right legs encode the preconditions for a reaction to occur and the effect of the reaction (outcome), respectively, while the centre denotes those parts that are unchanged.
\begin{definition}[Reaction scheme]\label{def:reaction-scheme}
A {\em reaction scheme} is a span $A\xleftarrow f K\xrightarrow g B$ in the category of pre-chemical graphs, whose boundaries $A$ and $B$ are chemical graphs with the same net charge, such that
\begin{itemize}
\item $f:V_K\rightarrow V_A$ and $g:V_K\rightarrow V_B$ are injective,
\item $f$ and $g$ are surjective on neutral vertices: if $a\in \Neu A$ and $b\in \Neu B$, then there are $k,j\in V_K$ such that $f(k)=a$ and $g(j)=b$,
\item $f$ and $g$ preserve all vertex labels: $\tau_A f=\tau_K=\tau_B g$,
\item all vertices of $K$ are neutral: $\Neu K =V_K$,
\item the span is terminal with respect to spans with the above properties.
\end{itemize}
\end{definition}

\begin{example}\label{ex:reaction-scheme}
The rule shown below appears in the equation describing glucose phosphorylation. It is a reaction scheme in the sense of Definition~\ref{def:reaction-scheme}. We denote the morphisms by vertex superscripts: the vertex in the domain is mapped to the vertex in the codomain with the same superscript.
\begin{center}
\scalebox{.85}{\tikzfig{chem-rule}}.
\end{center}
\end{example}

There is a more compact way to present a reaction scheme, which suffers from the fact that the mapping involved is not a morphism in $\PChem$.
\begin{proposition}\label{prop:bij-react}
The data of a reaction scheme is in one-to-one correspondence with an ordered pair of chemical graphs $(A,B)$ with the same net charge, together with a labelled bijection $b:\Neu A\rightarrow\Neu B$.
\end{proposition}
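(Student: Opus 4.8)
\emph{Overall approach.} The plan is to exhibit the two mutually inverse assignments directly. In one direction I would simply read the ordered pair and the bijection off a given reaction scheme, using terminality nowhere; in the other I would build the apex of a reaction scheme out of the triple $(A,B,b)$ by an ``intersection along $b$'' performed on the neutral subgraphs, and then invoke the terminality clause of Definition~\ref{def:reaction-scheme} to recognise the result as a reaction scheme. So, starting from a reaction scheme $A\xleftarrow f K\xrightarrow g B$, I would first observe that the legs are tightly constrained on vertices: since every vertex of $K$ is neutral and $f,g$ preserve vertex labels, $f(V_K)\sse\Neu A$ and $g(V_K)\sse\Neu B$, and then injectivity of $f,g$ together with surjectivity onto the neutral vertices forces $f\colon V_K\to\Neu A$ and $g\colon V_K\to\Neu B$ to be label-preserving bijections. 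Hence $b\coloneqq g\circ f^{-1}\colon\Neu A\to\Neu B$ is a labelled bijection, and, as $A$ and $B$ are chemical graphs of equal net charge by hypothesis, this produces the required triple $(A,B,b)$.

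\emph{The reverse construction.} Conversely, given chemical graphs $A,B$ of equal net charge and a labelled bijection $b\colon\Neu A\to\Neu B$, I would take $V_K\coloneqq\Neu A$ with $\tau_K\coloneqq\tau_A|_{\Neu A}$, let $f$ be the inclusion $\Neu A\hookrightarrow V_A$ and $g$ be $b$, and define the edge labelling by $m_K(v,u)\coloneqq m_A(v,u)$ if $m_A(v,u)=m_B(bv,bu)$, and $m_K(v,u)\coloneqq 0$ otherwise --- that is, the intersection-along-$b$ recipe carried out on the neutral subgraphs. The routine verifications are: that $(V_K,\tau_K,m_K)$ is a pre-chemical graph (symmetry and irreflexivity of $m_K$ are inherited from $m_A,m_B$); that $m_K$ never takes the value $\ib$, because an ionic bond in a chemical graph forces its endpoints to be charged, which vertices of $K$ are not, so that conditions~3 and~4 of Definition~\ref{def:morphism} both collapse to preservation of covalent bond order and are satisfied by $f$ and $g$; and that the span $A\xleftarrow f K\xrightarrow g B$ meets every clause of Definition~\ref{def:reaction-scheme} except possibly terminality.

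\emph{Terminality and mutual inverseness.} For terminality I would take any span $A\xleftarrow{f'}K'\xrightarrow{g'}B$ with those properties and inducing the same labelled bijection $b$; as in the first paragraph its legs are label-preserving bijections onto $\Neu A$ and $\Neu B$, so the only possible span morphism is $h\coloneqq f^{-1}\circ f'$ on vertices, and one checks it really is a morphism of pre-chemical graphs: whenever $m_{K'}(v,u)\neq 0$ we get $m_{K'}(v,u)=m_A(f'v,f'u)=m_B(g'v,g'u)$, so these two agree and are nonzero, whence $m_K(hv,hu)=m_{K'}(v,u)$ by the definition of $m_K$; and $h$ is the unique such morphism because a pre-chemical graph morphism is determined by its action on vertices. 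Composing the two assignments one way returns $(A,B,g\circ f^{-1})=(A,B,b)$ on the nose; composing them the other way gives a span that is again terminal for the same induced bijection, hence isomorphic as a span to the one we started from, which is the correct sense of equality since (pre-)chemical graphs are taken up to vertex renaming throughout this section.

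\emph{Main obstacle.} The only genuinely delicate point is the correct reading of the terminality clause. For fixed $A,B$ the category of spans with the stated properties is disconnected, with components indexed by the induced labelled bijection on neutral vertices (any span morphism must preserve $g\circ f^{-1}$), so ``terminal'' has to be understood relative to that bijection --- equivalently, as maximality of the edge labelling of $K$ once $A$, $B$ and the neutral correspondence are fixed --- and the content of the proposition is precisely that this correspondence is the one remaining degree of freedom. Everything else is a mechanical unwinding of Definitions~\ref{def:morphism} and~\ref{def:reaction-scheme}.
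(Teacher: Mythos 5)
Your proposal is correct and follows essentially the same route as the paper's (much terser) proof: the forward assignment is the same $gf^{-1}$, and the reverse construction builds the apex as the intersection along $b$ of the neutral subgraphs, which is exactly the paper's $\Neu A*_b\Neu B$ with the inclusion and $\kappa b$ as legs. The only difference is that you additionally carry out the verifications the paper omits (that the span satisfies the clauses of the definition, terminality, and mutual inverseness), and your observation that the terminality clause must be read relative to the induced labelled bijection is a genuine and correct clarification of a point the paper leaves implicit.
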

\begin{proof}
For a reaction scheme $A\xleftarrow f K\xrightarrow g B$, we get the labelled bijection $gf^{-1}:\Neu A\rightarrow\Neu  B$.

Given the data as in the statement of the proposition, we obtain the reaction scheme
$$A\xhookleftarrow{\iota}\Neu A*_b\Neu B\xrightarrow{\kappa b}B$$
where $\iota : \Neu A\hookrightarrow A$ and $\kappa : \Neu B\hookrightarrow B$ are the inclusions.
\end{proof}

\begin{definition}[Matching]
A {\em matching} is a morphism $f:A\rightarrow C$ in $\PChem$ whose domain and codomain are both chemical graphs, such that for all $u,v\in\Chem A$ we have $m_C(fu,fv)=m_A(u,v)$.
\end{definition}
\begin{example}\label{ex:matching}
A matching is a morphism between chemical graphs (seen as pre-chemical graphs), with the further restriction that no new bonds can be added between existing vertices. We thus think of a matching as identifying the domain as a substructure of the codomain. We slightly modify the morphism in Example~\ref{ex:morphism} to obtain a matching:
\begin{center}
\scalebox{.85}{\tikzfig{matching-example}}.
\end{center}
\end{example}

\begin{proposition}\label{prop:matching-react}
Given a matching and a reaction scheme as below left (all morphisms are in $\PChem$), there exist unique up to an isomorphism pre-chemical graphs $D$ and $E$ such that the diagram can be completed to the one on the right, where both squares are pushouts.
\ctikzfig{reaction-matching}
Moreover, $E$ is in fact a chemical graph, and if $\alpha(C)=\eset$, then also $\alpha(E)=\eset$.
\end{proposition}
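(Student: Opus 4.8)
The plan is to follow the double-pushout recipe: first construct $D$ as a pushout complement of $K\xrightarrow{f}A\xrightarrow{m}C$, so that the left square is a pushout, and then construct $E$ as the pushout of $K\xrightarrow{g}B$ along the morphism $K\to D$ produced in the first step, so that the right square is a pushout. Since a pushout is determined up to isomorphism by its defining span, all of the uniqueness claims reduce to uniqueness of the pushout complement $D$ and to the fact that the pushout $E$ actually exists in $\PChem$ (a genuine point, as $\PChem$ has only a partial monoidal structure and need not have all pushouts). Throughout I would use Proposition~\ref{prop:bij-react} to present the reaction scheme in compact form: identify $V_K$ with $\Neu A$ and $f$ with the inclusion $\Neu A\hookrightarrow A$, and write $g=\kappa b$ with $\kappa\colon\Neu B\hookrightarrow B$ the inclusion. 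The structural fact I would establish first and use repeatedly is a rigidity property of matchings: because $A$ is a chemical graph, every chemical vertex of $A$ already carries its full covalent valence, and a matching preserves bonds between chemical vertices exactly, so combined with the valence axiom~\ref{cgraph:val} for $C$ and the $\alpha$-clause of Definition~\ref{def:morphism} this forces every $C$-neighbour of a chemical vertex of $A$ to lie in $m(A)$. Since $f(V_K)=\Neu A$, the vertices removed to form $D$ are exactly those in $m(\Crg A)=m(V_A\setminus f(V_K))$, and by rigidity these have all of their neighbours inside $m(A)$; thus the ``reaction region'' is closed off from the rest of $C$.

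For the left square I would take $V_D\coloneqq V_C\setminus m(\Crg A)$, with vertex labels, triangle and tetrahedron relations restricted from $C$, and with $m_D(v,w)\coloneqq m_C(v,w)$ except that $m_D(mk,mk')\coloneqq m_K(k,k')$ for $k,k'\in\Neu A$ — that is, the bonds the reaction scheme marks as changed are switched off (there is no clash with the $A$-side since, by construction of $*_b$, whenever $m_K(k,k')$ differs from $m_A(k,k')$ it equals $0$). One checks that $k\mapsto mk$ is a morphism $K\to D$, that the vertex inclusion $D\to C$ is a morphism reinstating the switched-off bonds, that the resulting square commutes, and that it presents $C$ as the amalgamation of $A$ and $D$ over $K$, hence is a pushout; these verifications are routine given the rigidity property. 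Uniqueness of $D$ up to isomorphism then follows by the standard pushout-complement argument: $V_D$ is forced by the pushout on vertices, and the edge labels and relations are forced using injectivity of $f$ and of $m|_{\Chem A}$ together with the rigidity of chemical graphs.

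For the right square I would build $E$ as the amalgamation of $D$ and $B$ over $K$: set $V_E\coloneqq V_D\sqcup(V_B\setminus g(V_K))$, where the second summand is contained in $\Crg B$ since $g$ is surjective on neutral vertices, with labels, relations and edge labels inherited from $D$ and $B$ on the respective parts and given by the join of the two data on the glued vertices. This is well defined: on a glued vertex $\tau_B g=\tau_K=\tau_A f$ agrees with $\tau_C m f$ on the chemical vertices of $K$ and lies in $\At\sqcup\{\alpha\}$ over the $\alpha$-vertices, so the two labels are compatible; and the only edges between two glued vertices are the ones the $D$-side carries via $m_K$, which already coincides with the corresponding $B$-edge whenever both are nonzero, so the join creates no conflict among chemical vertices. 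Checking the universal property makes the right square a pushout; since $D$ and $E$ are thereby a pushout complement and a pushout respectively, they are unique up to isomorphism, which completes the first half of the statement.

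It remains to prove the ``moreover'' part. To see that $E$ is a chemical graph I would verify conditions~\ref{cgraph:val}--\ref{cgraph:ion3} of Definition~\ref{def:chemgraph}. Away from the reaction region $E$ looks exactly like $C$, so they hold there. At a glued vertex $v=mk=gk$, the subtle point is to track precisely which neighbours survive into $E$: by the rigidity property the $C$-neighbours of $mk$ all lie in $m(A)$, the ones in $m(\Crg A)$ are deleted, and the remaining ones are glued, so the covalent degree of $v$ in $E$ is read off from the $B$-side, where $B$, being a chemical graph, already realises $\mathbf v\tau_B(v)=\mathbf v\tau_K(k)$; this gives~\ref{cgraph:val}, and conditions~\ref{cgraph:alpha}, \ref{cgraph:charge} and~\ref{cgraph:ion1}--\ref{cgraph:ion3} are local and follow by a case analysis on whether the vertices involved are atoms, charges or $\alpha$-vertices, invoking the chemical-graph axioms for $B$ and $C$, the rigidity property for deleted charged vertices, and the equality of the net charges of $A$ and $B$. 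I expect precisely this case analysis — keeping the ionic-bond conditions~\ref{cgraph:ion1}--\ref{cgraph:ion3} intact where an ionic bond meets the reaction region, and pinning down the neighbourhood of a glued vertex that was an $\alpha$-vertex of $A$ — to be the main obstacle; the remainder is diagram chasing. Finally, if $\alpha(C)=\eset$ then $V_D\sse V_C$ has no $\alpha$-vertex, the newly added vertices lie in $\Crg B$ and are charged, and on a glued vertex the label is inherited from the more defined $D$-side, overriding any $\alpha$ label coming from $B$; hence $\alpha(E)=\eset$.
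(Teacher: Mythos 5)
Your construction matches the paper's own proof sketch essentially step for step: $D$ is obtained by deleting $\Crg{m(\Chem A)}=m(\Crg A)$ from $C$ and resetting the bonds on the glued region to those of $K$, and $E$ by adjoining $\Crg B$ and reinstating the bonds of $B$, using the same compact presentation of the reaction scheme via Proposition~\ref{prop:bij-react} and the same ``rigidity'' of matchings (the paper's Lemma~\ref{lma:matching-equal-images}). The one place where the paper is more precise than your sketch is the edge labelling of $E$ at images of $\alpha$-vertices of $A$, where $m$ need not be injective and several bonds of $B$ must be summed into a single bond of $E$ (the $\sum_{w\in v^*}m_B(u^*,w)$ entries of the paper's table), so your ``join'' description is slightly too coarse there --- but you correctly flag exactly this neighbourhood analysis as the main remaining obstacle, and the paper's own argument is likewise only a sketch at that point.
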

\begin{proof}[Proof sketch]
We give explicit constructions of the pre-chemical graphs $D$ and $E$, as well as the morphisms involved.

By Proposition~\ref{prop:bij-react}, we may assume that the reaction scheme is given by
$$A\xhookleftarrow{\iota}\Neu A*_b\Neu B\xrightarrow{\kappa b}B$$
for some labelled bijection $b:\Neu A\rightarrow \Neu B$. Let us denote by $(V_A,\tau_A,m_*,\mathcal P_*,\mathcal T_*)$ the intersection graph in the middle of the above diagram. Further, we assume that the vertex sets of $A$, $B$ and $C$ are disjoint. Now define the pre-chemical graph $D$ as follows:
\begin{itemize}
\item $V_D\coloneqq V_C\setminus\Crg{m(\Chem A)}$,
\item $\tau_D\coloneqq\tau_C|_{V_D}$,
\item for all $u,v\in\Neu{m(\Chem A)}$, let $m_D(u,v)\coloneqq m_*(\iota^{-1}m^{-1}u,\iota^{-1}m^{-1}v)$,
\item for all $u\in V_D\setminus m(\Chem A)$ and $v\in V_D$, let $m_D(u,v)\coloneqq m_C(u,v)$,
\item for all $x,y,z\in V_D$, if $x,y,z\in\im(m)$, then $\mathcal P_D(xyz)$ if and only if there exist $x'\in\iota^{-1}m^{-1}x$, $y'\in\iota^{-1}m^{-1}y$ and $z'\in\iota^{-1}m^{-1}z$ with $\mathcal P_*(x',y',z')$, and otherwise $\mathcal P_D(xyz)$ if and only if $\mathcal P_C(xyz)$,
\item for all $x,y,z,w\in V_D$, if $x,y,z,w\in \im(m)$, then $\mathcal T_D(xyzw)$ if and only if there exist $x'\in\iota^{-1}m^{-1}x$, $y'\in\iota^{-1}m^{-1}y$, $z'\in\iota^{-1}m^{-1}z$ and $w'\in\iota^{-1}m^{-1}w$ with $\mathcal T_*(x',y',z',w')$ and otherwise $\mathcal T_D(xyzw)$ if and only if $\mathcal T_C(xyzw)$.
\end{itemize}
In order to simplify notation in the definition of the pre-chemical graph $E$, let us introduce the following mapping
$$()^*:m(\Neu A)\cup\Crg B\rightarrow \mathcal P(V_B)$$
defined by $w^*\coloneqq bm^{-1}(w)$ if $w\in m(\Neu A)$ and $w^*\coloneqq w$ if $w\in\Crg B$. To further simplify the notation, we identify the singleton sets in $\mathcal P(V_B)$ with the corresponding elements in $V_B$ (we already use this convention in defining $()^*$ on $\Crg B$). Let us now define the pre-chemical graph $E$ as follows:
\begin{itemize}
\item $V_E\coloneqq V_D\cup\Crg B$,
\item for all $u\in V_D$, let $\tau_E(u)\coloneqq\tau_D(u)=\tau_C(u)$,
\item for all $u\in\Crg B$, let $\tau_E(u)\coloneqq\tau_B(u)$,

\item for all $x,y,z\in V_E$, if $x,y,z\in\im(m)$, then $\mathcal P_E(xyz)$ if and only if there exist $x'\in x^*$, $y'\in y^*$ and $z'\in z^*$ with $\mathcal P_B(x',y',z')$, and otherwise $\mathcal P_E(xyz)$ if and only if $\mathcal P_C(xyz)$,
\item for all $x,y,z,w\in V_E$, if $x,y,z,w\in\im(m)$, then $\mathcal T_E(xyzw)$ if and only if there exist $x'\in x^*$, $y'\in y^*$, $z'\in z^*$ and $w'\in w^*$ with $\mathcal T_B(x',y',z',w')$ and otherwise $\mathcal T_E(xyzw)$ if and only if $\mathcal T_C(xyzw)$.
\end{itemize}
Noting that the vertex set of $E$ can be written as a union of four disjoint sets
$$V_E=(V_C\setminus\im m)\cup\Neu{m(\Chem A)}\cup m(\alpha(A))\cup\Crg B,$$
we define the edge labelling function $m_E$ in the table below, where a row indicates a set to which vertex $u$ belongs, a column indicates a set to which vertex $v$ belongs, and a cell defines $m_E$ at those particular values of $u$ and $v$.
\begin{center}
\begin{tabular}{c || c | c | c | c}
$u\backslash v$ & $V_C\setminus\im m$ & $\Neu{m(\Chem A)}$ & $m(\alpha(A))$ & $\Crg B$ \\ \hhline{=#=|=|=|=}
$V_C\setminus\im m$ & $m_C(u,v)$ & $m_C(u,v)=0$ & $m_C(u,v)$ & $0$ \\ \hline
$\Neu{m(\Chem A)}$ & $m_C(u,v)=0$ & $m_B(u^*,v^*)$ & $\sum_{w\in v^*} m_B(u^*,w)$ & $m_B(u^*,v)$ \\ \hline
$m(\alpha(A))$ & $m_C(u,v)$ & $\sum_{w\in u^*} m_B(v^*,w)$ & $m_C(u,v)$ & $\sum_{w\in u^*} m_B(v,w)$ \\ \hline
$\Crg B$ & $0$ & $m_B(v^*,u)$ & $\sum_{w\in v^*} m_B(u,w)$ & $m_B(u,v)$
\end{tabular}
\end{center}
One then checks that the squares below are pushouts:
\ctikzfig{matching-pbc}
where $\iota'$ and $\kappa'$ are the appropriate inclusions, and $m''$ is defined as $m\iota b^{-1}$ on $\Neu B$ and as identity on $\Crg B$.
\end{proof}

\begin{definition}\label{def:reaction}
A {\em reaction} is a double pushout diagram
\ctikzfig{reaction}
in $\PChem$ such that the top span $A\leftarrow K\rightarrow B$ is a reaction scheme, $m:A\rightarrow C$ is a matching, and $C$ and $E$ are oriented molecular graphs (cf.~Proposition~\ref{prop:matching-react}). We say that $C$ is the {\em reactant graph} and $E$ the {\em product graph} of the reaction.
\end{definition}
Analogously to reaction schemes, we can present a reaction in a more concrete (yet equivalent) way (Proposition~\ref{prop:reaction-concrete}), which involves mappings that are not morphisms in $\PChem$. To this end, we need the notion of a {\em chemical subgraph}.
\begin{definition}[Chemical subgraph]
Given a chemical graph $A$, a {\em chemical subgraph} is a subset $U\sse V_A$ such that (1) if for $u\in U$ one has $\tau_A(u)\in\{+,\alpha\}$ and $v\in V_A$ is the unique vertex with $m_A(u,v)=1$, then $v\in U$, and (2) if for $u\in U$ and $v\in\Crg A$ one has $m_A(u,v)=1$, then $v\in U$.
\end{definition}
Note that a chemical subgraph is not itself a chemical graph, albeit it can be converted to one by adding an appropriate number of $\alpha$-vertices, as we shall do in the proof of Proposition~\ref{prop:reaction-concrete}. When reasoning about chemical subgraphs, the following concept will be useful.
\begin{definition}[Covalent neighbours]
Given a pre-chemical graph $A$ and a vertex $v\in V_A$, we define the set of {\em covalent neighbours} of $v$ as follows:
$$CN(v)\coloneqq\{u\in V_A : \cov(m_A(v,u))\neq 0\}.$$
\end{definition}
\begin{lemma}\label{lma:matching-equal-images}
If $m:A\rightarrow C$ is a matching and $v\in\Chem A$, then $m(CN(v))=CN(mv)$.
\end{lemma}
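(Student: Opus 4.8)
The plan is to prove the two inclusions $m(CN(v))\subseteq CN(mv)$ and $CN(mv)\subseteq m(CN(v))$ separately. The first is a direct unfolding of the axioms for a morphism and a matching; the second needs a global counting argument anchored by the valence condition (condition~\ref{cgraph:val} of Definition~\ref{def:chemgraph}), since a matching may identify several vertices of $A$.

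For $m(CN(v))\subseteq CN(mv)$ I would take $w\in CN(v)$, so $\cov(m_A(v,w))\neq 0$, and split on whether $w$ is a chemical or an $\alpha$-vertex. If $w\in\Chem A$, the defining property of a matching gives $m_C(mv,mw)=m_A(v,w)$, hence $\cov(m_C(mv,mw))\neq 0$. If $w\in\alpha(A)$, apply the fourth defining condition of a morphism (Definition~\ref{def:morphism}) to the pair $(w,v)$: since $m_A$ is symmetric and every summand $\cov(m_A(s,t))$ is a natural number, the sum defining $\cov(m_C(mw,mv))$ contains the strictly positive term $\cov(m_A(w,v))$ and is therefore nonzero. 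Either way $mw\in CN(mv)$.

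For $CN(mv)\subseteq m(CN(v))$ the first step is a local estimate: for every $y\in m(CN(v))$,
\[
\sum_{w\in m^{-1}(y)}\cov(m_A(v,w))\ \le\ \cov(m_C(mv,y)).
\]
Again this splits into two cases. If $m^{-1}(y)$ contains an $\alpha$-vertex $w_0$ lying in $CN(v)$, apply the fourth condition of Definition~\ref{def:morphism} to $(w_0,v)$ and keep only the summands whose second coordinate is $v$ (legitimate since $v\in m^{-1}(mv)$ and all summands are nonnegative). Otherwise $m^{-1}(y)\cap CN(v)$ consists of the single chemical vertex $w_0$ forced by injectivity of $m|_{\Chem A}$; the matching property gives $\cov(m_C(mv,y))=\cov(m_A(v,w_0))$, and every other preimage of $y$ is an $\alpha$-vertex that is not in $CN(v)$ and so contributes $0$. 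Summing the local estimate over $y\in m(CN(v))$ — using that preimages of distinct vertices are disjoint and that $\cov(m_A(v,-))$ vanishes outside $CN(v)$ — yields $\sum_{w\in CN(v)}\cov(m_A(v,w))\le\sum_{y\in m(CN(v))}\cov(m_C(mv,y))$. The left-hand side equals $\mathbf v\tau_A(v)$ by condition~\ref{cgraph:val} applied to $A$, which equals $\mathbf v\tau_C(mv)$ since $\tau_C(mv)=\tau_A(v)$ by the first condition of Definition~\ref{def:morphism}, which in turn equals $\sum_{y\in V_C}\cov(m_C(mv,y))$ by condition~\ref{cgraph:val} applied to $C$; and this is $\ge\sum_{y\in CN(mv)}\cov(m_C(mv,y))\ge\sum_{y\in m(CN(v))}\cov(m_C(mv,y))$, the last step using the first inclusion. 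Hence the whole chain consists of equalities, so $\sum_{y\in CN(mv)\setminus m(CN(v))}\cov(m_C(mv,y))=0$; since each summand is a positive integer, the index set is empty, giving $CN(mv)\subseteq m(CN(v))$.

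I expect the main obstacle to be the local estimate: it is exactly the place where one must account for a matching collapsing a chemical vertex together with one or more $\alpha$-vertices, all of which may be covalent neighbours of $v$. The fourth condition of Definition~\ref{def:morphism}, with its sum over preimages, is precisely what makes this bookkeeping work, and getting the case split and the discarding of extraneous summands right is the only delicate point; the remainder is a routine valence count.
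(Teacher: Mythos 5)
Your proof is correct and takes essentially the same route as the paper: establish $m(CN(v))\subseteq CN(mv)$ directly, then use the valence condition on $A$ and $C$ together with $\tau_C(mv)=\tau_A(v)$ to force the covalent-bond counts to agree, so the inclusion must be an equality. Your local estimate and case split just supply the details behind the paper's ``using the fact that $m$ is a matching, one can further show'' step, so the argument is a more carefully justified version of the same proof.
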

\begin{proof}
The inclusion $m(CN(v))\sse CN(mv)$ is clear. On the other hand, since $\tau_A(v)=\tau_C(mv)$, we have
$$\sum_{u\in CN(v)} m_A(v,u) = \sum_{u\in CN(mv)} m_C(mv,u).$$
Using the fact that $m$ is a matching, one can further show that
$$\sum_{u\in CN(v)} m_A(v,u) = \sum_{u\in m(CN(v))} m_C(mv,u),$$
so that we obtain
$$\sum_{u\in m(CN(v))} m_C(mv,u) = \sum_{u\in CN(mv)} m_C(mv,u).$$
Since $m(CN(v))\sse CN(mv)$, it follows that the sets are, in fact, equal.
\end{proof}
\begin{lemma}\label{lma:image-matching-subgraph}
If $m:A\rightarrow C$ is a matching, then $m(\Chem A)$ is a chemical subgraph of $C$.
\end{lemma}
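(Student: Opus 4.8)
The plan is to check, directly from the definition, that the vertex subset $U \coloneqq m(\Chem A) \sse V_C$ satisfies the two closure conditions required of a chemical subgraph of $C$. The single tool doing the real work will be Lemma~\ref{lma:matching-equal-images}: for every $u_0 \in \Chem A$, a matching carries the covalent neighbourhood $CN(u_0)$ \emph{onto} $CN(m u_0)$. So, fixing $u = m(u_0) \in U$ with $u_0 \in \Chem A$, I would first record the running observation that $\tau_C(u) = \tau_A(u_0)$ by clause~(1) of Definition~\ref{def:morphism}, and that this label lies in $\At \sqcup \{+,-\}$ since $u_0 \in \Chem A$; in particular $\tau_C(u) \neq \alpha$.

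For the first closure condition, I would assume $\tau_C(u) \in \{+,\alpha\}$, which by the observation above forces $\tau_C(u) = \tau_A(u_0) = +$. Since $\mathbf v(+) = 1$ and $C$ is a chemical graph, condition~\ref{cgraph:val} of Definition~\ref{def:chemgraph} shows that $u$ has a single covalent neighbour $v$ in $C$ with $m_C(u,v) = 1$, so $v$ is exactly the distinguished vertex named in the definition, and $v \in CN(u)$. Applying Lemma~\ref{lma:matching-equal-images} to $u_0 \in \Chem A$ gives $v = m(v_0)$ for some $v_0 \in CN(u_0)$, and the point is then to rule out $v_0 \in \alpha(A)$: if $\tau_A(v_0) = \alpha$, the valence computation run inside $A$ forces $m_A(u_0, v_0) = 1$, whereupon condition~\ref{cgraph:alpha} of Definition~\ref{def:chemgraph} demands $\tau_A(u_0) \in \At \sqcup \{-\}$, contradicting $\tau_A(u_0) = +$. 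Hence $v_0 \in \Chem A$ and $v \in m(\Chem A) = U$. The second closure condition is easier: if $v \in \Crg C$ with $m_C(u,v) = 1$, then $v \in CN(u) = m(CN(u_0))$ by the same lemma, so $v = m(w_0)$ with $w_0 \in CN(u_0)$; and $w_0$ cannot be an $\alpha$-vertex, for otherwise clause~(2) of Definition~\ref{def:morphism} would put $\tau_C(v) \in \At \sqcup \{\alpha\}$, incompatible with $\tau_C(v) \in \{+,-\}$. So $w_0 \in \Chem A$ and $v \in U$.

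I expect the only subtle step to be the $\{+,\alpha\}$ branch of the first condition, namely showing that the distinguished unit-bond neighbour of $u$ in $C$ pulls back along $m$ to a chemical, rather than an $\alpha$, vertex of $A$. This is exactly where one must combine two of the chemical-graph axioms of Definition~\ref{def:chemgraph} for the \emph{domain} $A$ --- the valence constraint~\ref{cgraph:val} (to pin the bond order to $1$) and the $\alpha$-incidence constraint~\ref{cgraph:alpha} (to see that a $+$-vertex is never covalently bonded to an $\alpha$-vertex) --- whereas everything else is a mechanical unwinding of the definitions of matching, morphism, and $CN$, together with Lemma~\ref{lma:matching-equal-images}.
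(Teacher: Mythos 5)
Your proof is correct and takes essentially the same approach as the paper's: both verify the two closure conditions of a chemical subgraph directly, using Lemma~\ref{lma:matching-equal-images} together with the valence and incidence axioms of Definition~\ref{def:chemgraph} to pull covalent neighbours back along the matching. If anything, you are more explicit than the paper in the $+$-branch of the first condition, where the paper simply asserts (without spelling out the combination of conditions~\ref{cgraph:val} and~\ref{cgraph:alpha}) that the unique covalent neighbour of a $+$-vertex in $A$ is a chemical vertex.
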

\begin{proof}
Since $m(\Chem A)$ contains no $\alpha$-vertices, the condition of being a chemical graph becomes: (1) if $u\in m(\Chem A)$ and $v\in V_C$ such that $\tau_C(u)=+$ and $m_C(u,v)=1$, then $v\in m(\Chem A)$, and (2) for any $u\in m(\Chem A)$ and $v\in\Crg C$ with $m_C(u,v)=1$, we have that also $v\in m(\Chem A)$.

For (1), let $u\in m(\Chem A)$ and $v\in V_C$ with $\tau_C(u)=+$ and $m_C(u,v)=1$. Let $u'\in V_A$ be the unique vertex with $mu'=u$. Since $\tau_A(u')=+$, there is a unique $v'\in\Chem A$ with $m_A(u',v')=1$. It follows that $mv'=v$, so that $v\in m(\Chem A)$.

For (2), suppose that $u\in m(\Chem A)$ and $v\in\Crg C$ with $m_C(u,v)=1$. Let $u'\in V_A$ be the unique vertex with $mu'=v$. By Lemma~\ref{lma:matching-equal-images}, $m(CN(u'))=CN(u)$, so that $v\in m(CN(u'))$. Since an $\alpha$-vertex cannot be mapped to a $v$ (as it is charged), it follows that $v\in m(\Chem A)$.
\end{proof}
\begin{proposition}\label{prop:reaction-concrete}
Let $C$ and $E$ be molecular graphs. The data of a reaction $C\rightarrow E$ can be equivalently presented as a tuple $(U_C,U_E,b,i)$ where $U_C\sse V_C$ and $U_E\sse V_E$ are chemical subgraphs with equal net charge, $b:\Neu{U_C}\rightarrow \Neu{U_E}$ is a labelled bijection, and $i:V_C\setminus U_C\rightarrow V_E\setminus U_E$ is an isomorphism of pre-chemical graphs.
\end{proposition}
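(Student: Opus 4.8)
The plan is to describe the two passages between reactions $C\to E$ and tuples $(U_C,U_E,b,i)$ and then check that they are mutually inverse, leaning throughout on the explicit description of the product graph $E$ obtained in the proof of Proposition~\ref{prop:matching-react}. Since $C$ and $E$ are fixed, ``a reaction $C\to E$'' means a double pushout as in Definition~\ref{def:reaction} whose reactant and product graphs are exactly $C$ and $E$, and, as elsewhere in the paper, we reason up to vertex renaming.

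\emph{From a reaction to a tuple.} Given a reaction with reaction scheme $A\xleftarrow{f}K\xrightarrow{g}B$, matching $m:A\to C$ and pushout complement $D$, set $U_C\coloneqq m(\Chem A)$; this is a chemical subgraph of $C$ by Lemma~\ref{lma:image-matching-subgraph}. Set $U_E\coloneqq\Neu{m(\Chem A)}\cup\Crg B\subseteq V_E$ (the image of $\Chem B$ under the induced morphism $B\to E$); using the explicit form of $E$ from the proof of Proposition~\ref{prop:matching-react}, one checks this is a chemical subgraph of $E$. Since $m$ restricts to a label-preserving bijection $\Chem A\to U_C$ and, analogously, there is a label-preserving bijection $\Chem B\to U_E$, the net charge of $U_C$ equals that of $A$, which equals that of $B$ (by the definition of a reaction scheme), which equals that of $U_E$. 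For $b$, transport the labelled bijection $gf^{-1}:\Neu A\to\Neu B$ furnished by Proposition~\ref{prop:bij-react} through these two bijections and restrict to atom-labelled vertices; since $C$ and $E$ contain no $\alpha$-vertices, these are precisely the neutral vertices of $U_C$ and $U_E$, so we obtain $b:\Neu{U_C}\to\Neu{U_E}$. Finally, $i$ records the unchanged context: on the complement of $m(\Chem A)$ the formulas defining $m_E$, $\mathcal P_E$, $\mathcal T_E$ collapse to $m_C$, $\mathcal P_C$, $\mathcal T_C$, so $V_C\setminus U_C$ and $V_E\setminus U_E$ carry the same pre-chemical structure; $i$ is that identification.

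\emph{From a tuple to a reaction.} Given $(U_C,U_E,b,i)$, first ``cap'' $U_C$ to a chemical graph $A$: take the vertices, labels, edges and inherited orientation of $U_C$, and for each $u\in U_C$ adjoin $\mathbf v\tau_C(u)-\sum_{w\in U_C}\cov(m_C(u,w))$ fresh $\alpha$-vertices, each joined to $u$ by a single bond and in no triangle or tetrahedron relation. One checks $A$ is a chemical graph: condition~\ref{cgraph:val} holds by construction of the caps, and the two defining conditions of a chemical subgraph are precisely what is needed for the remaining conditions of Definition~\ref{def:chemgraph} to survive. Moreover $U_C\hookrightarrow C$ extends to a matching $\iota_C:A\to C$ by sending the $\alpha$-caps of each $u$ onto the covalent neighbours of $u$ in $V_C\setminus U_C$ with the multiplicities dictated by the bond orders of $C$ (the $\alpha$-vertex clause of Definition~\ref{def:morphism} then holds by design, and $\iota_C$ is a matching since it is the identity on $\Chem A=U_C$). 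Cap $U_E$ to a chemical graph $B$, with analogous matching $\iota_E:B\to E$, in the same way. The crucial observation is that the two families of $\alpha$-caps have the same cardinality: the total covalent bond order crossing the cut in $C$ equals $\sum_{w\in V_C\setminus U_C}\left(\mathbf v\tau_C(w)-\sum_{w'\in V_C\setminus U_C}\cov(m_C(w,w'))\right)$, which depends only on the pre-chemical graph $V_C\setminus U_C$ and is therefore preserved by $i$; the same holds on the $E$ side. Hence $b$ on atoms together with any bijection of the $\alpha$-caps assembles into a labelled bijection $\Neu A\to\Neu B$. Feeding $(A,B)$ and this bijection to Proposition~\ref{prop:bij-react} produces a reaction scheme, and feeding that scheme together with $\iota_C$ to Proposition~\ref{prop:matching-react} produces a double pushout whose reactant graph is $C$ and whose product graph is isomorphic to $E$ --- the isomorphism being assembled from $i$ on the context, $b$ on the reacting atoms, and the identity on $\Crg B$.

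\emph{Mutual inverseness} is then a diagram chase through these constructions. Going tuple $\to$ reaction $\to$ tuple returns $\iota_C(\Chem A)=U_C$, $\iota_E(\Chem B)=U_E$, the bijection $b$, and the identification $i$. Going reaction $\to$ tuple $\to$ reaction recovers, up to renaming the $\alpha$-vertices, the original matching, and via Proposition~\ref{prop:bij-react} the original reaction scheme, hence the original double pushout up to the isomorphism guaranteed by Proposition~\ref{prop:matching-react}; the point that makes this work is that, by the matching condition, the number of $\alpha$-neighbours of a vertex $a\in\Chem A$ in $A$ equals the valence deficit of $m(a)$ inside $U_C$. I expect the main obstacle to be precisely this bookkeeping around $\alpha$-vertices --- a matching need not be injective on them, and an $\alpha$-vertex of $A$ may even share its image with an atom --- so one must track how caps are routed and reason consistently up to renaming, exactly as in Propositions~\ref{prop:bij-react} and~\ref{prop:matching-react}. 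Checking that the triangle and tetrahedron relations transfer correctly is then routine, using that away from the rewritten part they coincide with those of $C$ and $E$.
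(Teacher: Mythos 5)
Your two passages are, up to notation, exactly the paper's: in the forward direction the paper reads the tuple $(m(\Chem A),\,m''(\Chem B),\,g'(f')^{-1},\,g'(f')^{-1})$ off the bottom span of the double pushout, which is your $U_C=m(\Chem A)$, $U_E=\Neu{m(\Chem A)}\cup\Crg B$ with $b$ and $i$ the two restrictions of the induced map; in the backward direction the paper builds the capped graph $U_C^{\alpha}$ by adjoining the formal $\alpha$-vertices $v^u_j$ (for $u\in U_C$, $v\in CN(u)\cap(V_C\setminus U_C)$, $j=1,\dots,m_C(u,v)$), which is precisely your capping of each $u$ with $\mathbf v\tau_C(u)-\sum_{w\in U_C}\cov(m_C(u,w))$ fresh $\alpha$-vertices, and the matchings $v^u_j\mapsto v$ agree as well.

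There is, however, one step that fails as written: ``$b$ on atoms together with \emph{any} bijection of the $\alpha$-caps assembles into a labelled bijection $\Neu A\rightarrow\Neu B$'' doing the job. Equality of the \emph{total} numbers of caps is not enough. The cap $v^u_j$ is routed by your matching to the context vertex $v$, and where the labelled bijection sends it determines, after the double pushout, which context vertex the corresponding bond of $U_E$ attaches to in the product graph. If the bijection does not respect this routing --- say $U_C=\{u_1,u_2\}$ with $u_1$ capped towards $v$ and $u_2$ towards $w$, and the bijection swaps the two caps --- the reconstructed product graph has $b(u_1)$ bonded to $i(w)$ and $b(u_2)$ bonded to $i(v)$, which is in general not $E$. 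The paper's bijection $b^{\alpha}$ is accordingly defined fibrewise: for each context vertex $v$ it maps the set $\{v^u_j\}$ of caps over $v$ onto the set $\{i(v)^{u'}_{j'}\}$ of caps over $i(v)$. The invariant you actually need is therefore the per-context-vertex count, not the total --- and your own formula already contains it: each summand $\mathbf v\tau_C(w)-\sum_{w'\in V_C\setminus U_C}\cov(m_C(w,w'))$ is exactly the number of caps routed to $w$, and it is preserved by $i$ since $i$ preserves vertex labels (hence valences) and edge labels. The repair is one line, but without it the claim that the product graph is isomorphic to $E$ via $i$, $b$ and the identity on $\Crg B$ does not go through. (Like the paper, you leave mutual inverseness as a sketch; that is consistent with the level of detail of the published proof.)
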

\begin{proof}
Given a reaction
\ctikzfig{reaction}
we obtain the tuple $(m(\Chem A),m''(\Chem B),g'(f')^{-1},g'(f')^{-1})$, where the first occurrence of $g'(f')^{-1}$ is understood to be restricted to $\Neu{m(\Chem A)}\rightarrow\Neu{m''(\Chem B)}$, while the second has the type $V_C\setminus m(\Chem A)\rightarrow V_E\setminus m''(\Chem B)$.

Conversely, given a tuple as in the statement of the proposition, we obtain the following reaction
\ctikzfig{tuple-to-reaction}
whose details we now define. First, let us define the following set of formal symbols:
$$N(U_C)\coloneqq\{v^u_j : u\in U_C, v\in CN(u)\cap (V_C\setminus U_C)\text{ and } j=1,\dots,m_C(u,v)\}.$$
We define the chemical graph $U_C^{\alpha}\coloneqq(U_C\cup N(U_C),\tau^{\alpha},m^{\alpha})$\footnote{We assume a unique choice of a vertex name for every element in $N(U_C)$, disjoint from the vertex names of $V_C$. For legibility, we omit this technical detail.} by the following labelling functions: for $u,w\in U_C$ and $v^a_j,z^b_i\in N(U_C)$, let $\tau^{\alpha}(u)\coloneqq\tau_C(u)$, $\tau^{\alpha}(v^a_j)\coloneqq\alpha$, $m^{\alpha}(u,w)\coloneqq m_C(u,w)$, $m^{\alpha}(v^a_j,z^b_i)\coloneqq 0$, $m^{\alpha}(u,v^a_j)\coloneqq 1$ if $u=a$ and $m^{\alpha}(u,v^a_j)\coloneqq 0$ if $u\neq a$. The chemical graph $U_E^{\alpha}$ is defined similarly.

The matching $m:U_C^{\alpha}\rightarrow C$ is defined as identity on $U_C$, and for $v^u_j\in N(U_C)$ as $v^u_j\mapsto v$. The matching $m''$ is defined similarly.

The labelled bijection $b^{\alpha}:\Neu{U_C^{\alpha}}\rightarrow\Neu{U_E^{\alpha}}$ is defined by: for $u\in\Neu{U_C}$, let $u\mapsto b(u)$, and for each $v\in V_C\setminus U_C$ the set $\{v^u_j : v^u_j\in N(U_C)\}$ is mapped to the set $\{i(v)^u_j : i(v)^u_j\in N(U_E)\}$, which have the same size since $i$ is an isomorphism.

The maps $\iota$, $\kappa$ and $\iota'$ are the appropriate inclusions, while $b+i$ is the coproduct function of $b:\Neu{U_C}\rightarrow V_E$ and $i:V_C\setminus U_C\rightarrow V_E$.
\end{proof}
\begin{example}
The following reaction (glucose phosphorylation) is an instance of the reaction scheme in Example~\ref{ex:reaction-scheme}; we have labelled the vertices in the images of matchings on both sides:
\begin{center}
\scalebox{.65}{\tikzfig{phosphorylation}}.
\end{center}
\end{example}
\begin{definition}[Category of reactions]
We denote by $\React$ the {\em category of reactions}, whose
\begin{itemize}
\item objects are molecular graphs,
\item morphisms $A\rightarrow B$ are tuples $(U_A,U_B,b,i)$ as in Proposition~\ref{prop:reaction-concrete},
\item the composition of $(U_A,U_B,b,i):A\rightarrow B$ and $(W_B,W_C,c,j):B\rightarrow C$ is given by
$$(Z_A,Z_C,(c+j)(b+i),ji):A\rightarrow C,$$
where $Z_A\coloneqq U_A\cup i^{-1}(W_B\cap (V_B\setminus U_B))$ and $Z_C\coloneqq W_C\cup j(U_B\cap (V_B\setminus W_B))$,
\item for a molecular graph $A$, the identity is given by $(\eset,\eset,!,\id_A)$, where $!$ is the unique endomorphism on the empty set.
\end{itemize}
\end{definition}
\begin{remark}
We can compose the reactions as double pushout diagrams (Definition~\ref{def:reaction}) if we view them as 1-cells in the following 2-category:
\begin{itemize}
\item the 0-cells are molecular graphs,
\item a 1-cell $A\rightarrow C$ is a reaction with the reactant graph $A$ and the product graph $C$,
\item a 2-cell between two parallel reactions consists of four morphisms: a morphism between the apexes of the bottom spans, and a span morphism between the top spans, such that all resulting diagrams commute,
\item the identity reaction on $A$ is given by the empty reaction scheme,
\item the horizontal composition of two reactions is given as follows:
\begin{enumerate}
\item form the pullbacks marked in the diagram below,
\item form the pushout of the upper pullback,
\item one can show that the pushout embeds into the lower pullback (dashed line),
\item complete the double pushout diagram by forming pushout complements.
\end{enumerate}
\end{itemize}
\begin{center}
\scalebox{.8}{\tikzfig{composition-reactions}}
\end{center}
In this light, we can view Proposition~\ref{prop:reaction-concrete} as showing that the 1-categorical part of the 2-category sketched above is equivalent to $\React$.
\end{remark}

\section{Retrosynthesis in Layered Props}\label{sec:retrosynthesis}

The main object of interest of this paper is the layered prop whose layers all share the same set of objects: namely, the chemical graphs up to a labelled graph isomorphism. The morphisms of a layer are either matchings, disconnection rules or reactions, parameterised by environmental molecules (these can act as solvents, reagents or catalysts). These layers are the main building blocks of our formulation of retrosynthesis.

Given a finite set $M$ of molecular entities, let us enumerate the molecular entities in $M$ as $M_1,\dots,M_k$. Given a list natural numbers $n=(n_1,\dots,n_k)$, we denote the resulting molecular graph $n_1M_1 + \dots + n_kM_k$ by $(V_n,\tau_n,m_n)$. We define three classes of symmetric monoidal categories parameterised by $M$ as follows. The objects for all categories are the (equivalence classes of) chemical graphs, and the morphisms $A\rightarrow B$ are given below:
\paragraph{$\MMatch$:} a morphism $(m,b) : A\rightarrow B$ is given by a matching $m:A\rightarrow B$ together with a labelled injection $b:b_1M_1 + \dots + b_kM_k\rightarrow B$ such that $\im(m)\cup\im(b) = B$, and $\im(m)\cap\im(b)=m(\alpha(A))\cap\Chem B$; the composite $A\xrightarrow{m,b}B\xrightarrow{n,c}C$ is given by $nm:A\rightarrow C$ and $nb+c:(b_1+c_1)M_1 + \dots + (b_k+c_k)M_k\rightarrow C$.
\paragraph{$\MReact$:} a generating morphism is a reaction $n_1M_1 + \dots + n_kM_k + A\xrightarrow r B$; given another reaction $m_1M_1 + \dots + m_kM_k + B\xrightarrow s C$, the composite $A\rightarrow C$ is given by
$$s\circ (r + \id_{m_1M_1 + \dots + m_kM_k}) : (n_1+m_1)M_1 + \dots + (n_k+m_k)M_k + A\rightarrow C.$$
\paragraph{$\MDisc$:} for every disconnection rule $d^{uv}_{ab}$ such that $d^{uv}_{ab}(n_1M_1 + \dots + n_kM_k + A)=B$, there are generating morphisms $d^{uv}_{ab}:A\rightarrow B$ and $\bar d^{uv}_{ab}:B\rightarrow A$, subject to the following equations:
\begin{itemize}
\item $\bar d^{uv}_{ab}d^{uv}_{ab}=\id_A$ and $d^{uv}_{ab}\bar d^{uv}_{ab}=\id_B$,
\item $d^{uv}_{ab}h^{wz}_{xy}=h^{wz}_{xy}d^{uv}_{ab}$ whenever both sides are defined,
\item $d_{u,v}+\id_C=d_{u,v}$ for every chemical graph $C$.
\end{itemize}

The idea is that the set $M$ models the reaction environment: the parametric definitions above capture the intuition that there is an unbounded supply of these molecules in the environment. In order to interpret sequences of disconnection rules as reactions, we need to restrict to those sequences whose domain and codomain are both molecular graphs: we thus write $\MDiscMol$ for the full subcategory of $M\mhyphen\Disc$ on molecular graphs. If $M=\eset$, we may omit the prefix.
\begin{example}\label{ex:mmatch}
A morphism in $\MMatch$ is a matching such that the environment contains enough ``building material'' to cover the complement of the image of the matching. We give an example below, taking $M=\{\texttt{CH\textsubscript{2}OHO\textsuperscript{+}}\}$, where the horizontal map is the matching, and the vertical map is the labelled injection:
\begin{center}
\scalebox{0.7}{\tikzfig{mmatch-example}}.
\end{center}
\end{example}
\begin{example}\label{ex:mdisc}
A morphism in $\MDisc$ is a chain of (dis)connection rules. We give an example below with $M=\eset$:
\begin{center}
\scalebox{0.7}{\tikzfig{mdisc-example}}.
\end{center}
Since the endpoints of the morphism are molecular graphs, the morphism is, in fact, in $\DiscMol$.
\end{example}

Next, we define the following identity-on-object functors between the above parameterised categories:
\begin{equation}\label{eq:m-functors}
\scalebox{1}{\tikzfig{m-functors-ictac}}.
\end{equation}
Given a morphism $(m,b) : A\rightarrow B$ in $\MMatch$, let us denote by $M_b=(V_b,\tau_b,m_b)$ the domain molecular graph of $b$, and by $(V_{Ab},\tau_{Ab},m_{Ab})$ the disjoint union of $A$ and $M_b$. Let $()^*:V_{Ab}\rightarrow V_B$ be the induced map. We define $D(m,b)\in\MDisc$ by
\begin{align*}
&\prod_{u,v\in\tau^{-1}_{b}\{+,-\}}\left(I^{uv}\right)^{\ion(m_b(u,v))}; \prod_{u,v\in\tau^{-1}_{b}(\At\coprod \{-\})}\prod_{i=1}^{\cov(m_b(u,v))}C^{uv}_{a_ib_i};  \\
&\prod_{u\in\tau^{-1}_{b}(+), v,w\in\tau^{-1}_{b}(\At)}\left(E^{vc}_{ab}\right)^{\cov(m_B(bu,bv))};\left(\bar E^{wc}_{ub}\right)^{\cov(m_b(u,w))}; \\
&\prod_{u,v\in\tau^{-1}_{Ab}(\At\coprod \{-\})}\prod_{i=1}^{\cov(m_B(u^*,v^*))}\bar C^{uv}_{a_ib_i}; \prod_{u,v\in\tau^{-1}_{Ab}\{+,-\}}\left(\bar I^{uv}\right)^{\ion(m_B(u^*,v^*))},
\end{align*}
where on the second line $c$ ranges over all $\alpha$-vertices.

Given a disconnection rule $d^{uv}_{ab}:A\rightarrow B$, let $CG\{u,v\}$ be the smallest chemical subgraph of $A$ containing $u$ and $v$. Then $CG\{u,v\}\cup\{a,b\}$ is a chemical subgraph of $B$. Since $\{a,b\}$ are either $\alpha$-vertices (covalent bond breaking), charges with opposite signs (electron detachment) or empty (ionic bond breaking), the sets $CG\{u,v\}$ and $CG\{u,v\}\cup\{a,b\}$ have the same net charge, and there is a labelled bijection between $CG\{u,v\}$ and the {\em chemical} neutral vertices of $CG\{u,v\}\cup\{a,b\}$. Since the disconnection rule does not change any other vertices or edges, the pre-chemical graphs $A\setminus CG\{u,v\}$ and $B\setminus\left(CG\{u,v\}\cup\{a,b\}\right)$ are equal. Thus, we define the functor $R:\MDiscMol\rightarrow\MReact$ by sending a sequence of disconnection rules to the composite of the above labelled bijections and isomorphisms.

Additionally, for every pair of finite sets of molecular entities such that $M\sse N$, there is an inclusion functor for each of the four classes of categories.

\begin{example}
Applying $D$ to the morphism in Example~\ref{ex:mmatch} yields the following sequence of (dis)connection rules in $\{\texttt{CH\textsubscript{2}OHO\textsuperscript{+}}\}\mhyphen\Disc$, where we combine the commuting rules of the same type into a single node:
\begin{center}
\scalebox{0.7}{\tikzfig{mmatch-to-disc}}.
\end{center}
\end{example}
\begin{example}
The morphism in Example~\ref{ex:mdisc} is sent to the following reaction by $R$: the chemical subgraphs are given by $\{w,z,u,v\}$ and $\{w,z,u,v,e,f\}$, while the labelled bijection and the pre-chemical graph isomorphism both send a vertex to itself.
\end{example}

\begin{definition}[Retrosynthetic step]\label{def:retro-step}
A {\em retrosynthetic step} consists of
\begin{itemize}
\item a molecular graphs $T$ and $B$, called the {\em target}, and the {\em byproduct},
\item a finite set of molecular entities $M$, called the {\em environment},
\item a chemical graph $S$, whose connected components are called the {\em synthons},
\item a molecular graph $E$, whose components are called the {\em synthetic equivalents},
\item morphisms $d\in\Disc(T,S)$, $m\in\MMatch(S,E)$, $r\in\MReact(E,T+B)$.
\end{itemize}
\end{definition}
\begin{proposition}\label{prop:step-graphically}
The data of a retrosynthetic step are equivalent to existence of the following morphism (1-cell) in the layered prop generated by the diagram~\eqref{eq:m-functors}:
\begin{center}
\scalebox{.85}{\tikzfig{retro-layers-ictac}}.
\end{center}
\end{proposition}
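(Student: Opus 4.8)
The proof is essentially an unravelling of the definition of a $1$-cell in the layered prop generated by the diagram~\eqref{eq:m-functors}, so the plan is to make that translation explicit and then check a single compatibility. Recall that a $1$-cell of such a layered prop is a string diagram in the formal syntax of layered props, built from: morphisms living inside a single layer; \emph{refinement} morphisms, which transport an object $a$ of a layer $\omega$ along a generating functor $f$ of~\eqref{eq:m-functors} to the object $fa$ of a higher layer; \emph{coarsening} morphisms, which transport along a right adjoint $g$ of such an $f$ from the higher layer back down; and the internal and external monoidal products; all composed in sequence and in parallel, subject to the coherence $2$-cells of the layered prop. The diagram displayed in the statement is one such $1$-cell: it has exactly three generator boxes, inhabiting the layers $\Disc$, $\MMatch$ and $\MReact$, wired in sequence by a fixed change-of-layer scaffolding read off from~\eqref{eq:m-functors}, with overall input wire $T$ and overall output wire $T+B$. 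The whole proof is the observation that supplying those three boxes is the same as supplying the data of Definition~\ref{def:retro-step}.

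The first step is to identify the scaffolding and compute its effect on objects. Up to the coherence $2$-cells it is the following. Since the synthons graph $S$ may carry $\alpha$-vertices, it is an object of $\Disc$, of $\MMatch$ and of $\MDisc$ --- whose objects are all chemical graphs --- but need not lie in $\MDiscMol$; moreover $\Disc$ and $\MMatch$ both sit below their common upper bound $\MDisc$, via an environment-enlarging inclusion and via $D$ respectively. The junction between the $d$-box and the $m$-box is therefore realised purely by refinements: refine the output $S$ of $d$ from $\Disc$ up to $\MDisc$, refine $m$ along $D$ so that it reads as $D(m)\colon S\to E$ in $\MDisc$, and compose, obtaining a morphism $T\to E$ of $\MDisc$. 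As $T$ and $E$ are molecular, this morphism can be coarsened along the full subcategory inclusion $\MDiscMol\hookrightarrow\MDisc$ --- this is precisely the step for which $E$, $T$ and $B$ must be molecular, and the reason $\MDiscMol$ was isolated in the first place --- and then refined along $R$ into $\MReact$, where it composes with the $r$-box to give a morphism $T\to T+B$. The crucial point, and the only genuinely non-formal one, is that this scaffolding is \emph{transparent on objects}: every refinement used runs along an identity-on-object functor ($D$, $R$, or an inclusion), hence fixes its object; and the one coarsening used runs along a full subcategory inclusion, whose right adjoint fixes, up to isomorphism, the molecular graphs it is applied to. Hence the displayed composite is a well-typed $1$-cell from $T$ in $\Disc$ to $T+B$ in $\MReact$ exactly when the codomain of $d$ equals the domain of $m$ --- a common chemical graph, $S$ --- and the codomain of $m$ equals the domain of $r$ --- a common molecular graph, $E$.

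Granting this, both directions are immediate. Given a retrosynthetic step $(T,B,M,S,E,d,m,r)$: $S$ is a chemical graph, hence an object of $\Disc$ and of $\MMatch$; $E$, $T$ and $B$ are molecular graphs, hence objects of $\MReact$; inserting $d$, $m$ and $r$ into the three boxes yields, by the previous paragraph, exactly the $1$-cell of the statement. Conversely, a $1$-cell of that shape has three generator boxes which are, by definition of the layers, a chain of (dis)connection rules in $\Disc$, a matching in $\MMatch$ and a reaction in $\MReact$; object-transparency of the scaffolding forces the codomain of the first to coincide with the domain of the second --- a chemical graph which we name $S$ --- and the codomain of the second with the domain of the third, named $E$; since the third box is a reaction, its reactant and product graphs are molecular, so $E$ and the overall output graph are molecular, the latter splitting as $T+B$ with $T$ the overall input; and the common environment parameter of the $\MMatch$- and $\MReact$-layers is $M$. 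Reading $d$, $m$ and $r$ off the three boxes then recovers exactly a retrosynthetic step.

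I expect the one real obstacle to be the second paragraph: reading off which refinements and coarsenings the displayed diagram actually uses and checking that each fixes the object in play --- above all, locating the single passage through $\MDiscMol$ and recognising that it, and hence the molecularity hypotheses on $E$, $T$ and $B$, are exactly what makes the scaffolding type-check, while the $\alpha$-vertices possibly present in $S$ are exactly what forbid routing the $d$--$m$ junction through $\MDiscMol$. The remainder is bookkeeping, once one is careful that ``equivalent'' is read up to the coherence $2$-cells of the layered prop --- so that the essentially forced choice of scaffolding is immaterial --- and that the displayed diagram hides no parallel wires carrying data beyond $(T,B,M,S,E,d,m,r)$.
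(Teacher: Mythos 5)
Your proposal is correct and matches the paper's (implicit) approach: the paper states this proposition without any proof, treating it as a direct unfolding of what a 1-cell in the layered prop generated by~\eqref{eq:m-functors} is, which is exactly the unravelling you carry out. Your identification of the change-of-layer scaffolding --- in particular that the single passage through $\MDiscMol$ is where molecularity of $E$, $T$ and $B$ is needed, while the possible $\alpha$-vertices of $S$ force the $d$--$m$ junction to stay in $\MDisc$ --- is consistent with the paper's definitions and supplies more detail than the paper itself records.
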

The morphism in the above proposition should be compared to the informal diagram in Figure~\ref{fig:clayden}. The immediate advantage of presenting a retrosynthetic step as a morphism in a layered prop is that it illuminates how the different parts of the definition fit together in a highly procedural manner. Equally importantly, this presentation is fully compositional: one can imagine performing several steps in parallel, or dividing the tasks of finding the relevant morphisms (e.g.~between different computers). Moreover, one can reason about different components of the step while preserving a precise mathematical interpretation (so long as one sticks to the rewrites (2-cells) of the layered prop).

\begin{definition}[Retrosynthetic sequence]\label{def:retro-sequence}
A {\em retrosynthetic sequence} for a target molecular entity $T$ is a sequence of morphisms $r_1\in M_1\mhyphen\React(E_1,T+B_0)$, $r_2\in M_2\mhyphen\React(E_2,E_1+B_1)$, \dots, $r_n\in M_1\mhyphen\React(E_n,E_{n-1}+B_{n-1})$ such that the domain of $r_i$ is a connected subgraph of the codomain of $r_{i+1}$:
\begin{center}
\scalebox{.85}{\tikzfig{retro-sequence}}.
\end{center}
\end{definition}
Thus a retrosynthetic sequence is a chain of reactions, together with reaction environments, such that the products of one reaction can be used as the reactants for the next one, so that the reactions can occur one after another (assuming that the products can be extracted from the reaction environment, or one environment transformed into another one). In the formulation of a generic retrosynthesis procedure below, we shall additionally require that each reaction in the sequence comes from ``erasing'' everything but the rightmost cell in a retrosynthetic step.

We are now ready to formulate step-by-step retrosynthetic analysis. The procedure is a high-level mathematical description that, we suggest, is flexible enough to capture all instances of retrosynthetic algorithms. As a consequence, it can have various computational implementations. Let $T$ be some fixed molecular entity. We initialise by setting $i=0$ and $E_0\coloneqq T$.
\begin{enumerate}
\item Choose a subset $\mathcal D$ of disconnection rules,
\item Provide at least one of the following:
\begin{enumerate}
\item a finite set of reaction schemes $\mathcal S$,
\item a function $\mathfrak F$ from molecular graphs to finite sets of molecular graphs,
\end{enumerate}
\item Search for a retrosynthetic step with $d\in\eset\mhyphen\Disc(E_i,S)$, $m\in\MMatch(S,E)$, and $r\in\MReact(E,E_i+B_i)$ such that all disconnection rules in $d$ and $D(m)$ are in $\mathcal D$, and we have at least one of the following:
\begin{enumerate}
\item there is an $s\in\mathcal S$ such that the reaction $r$ is an instance of $s$,
\item $E_i + B_i\in\mathfrak F(E)$;
\end{enumerate}
if successful, set $E_{i+1}\coloneqq E$, $M_{i+1}\coloneqq M$, $r_{i+1}\coloneqq r$ and proceed to Step 4; if unsuccessful, stop,
\item Check if the molecular entities in $E_{i+1}$ are known (commercially available): if yes, terminate; if no, increment $i\mapsto i+1$ and return to Step 1.
\end{enumerate}
Note how our framework is able to incorporate both template-based and template-free retrosynthesis, corresponding to the choices between (a) and (b) in Step 2: the set $\mathcal S$ is the template, while the function $\mathfrak F$ can be a previously trained algorithm, or other unstructured empirical model of reactions. We can also consider hybrid models by providing both $\mathcal S$ and $\mathfrak F$.

We take the output retrosynthetic sequence to always come with a specified reaction environment for each reaction. Currently existing tools rarely provide this information (mostly for complexity reasons), and hence, in our framework, correspond to the set $M$ always being empty in Step 3.

Steps 1 and 2 both require making some choices. Two approaches to reduce the number of choices, as well as the search space in Step 3, have been proposed in the automated retrosynthesis literature: to use molecular similarity~\cite{Coley2017-similarity}, or machine learning~\cite{Lin2020}. Chemical similarity can be used to determine which disconnection rules, reactions and environment molecules are actually tried: e.g.~in Step 1, disconnection rules that appear in syntheses of molecules similar to $T$ can be prioritised.

Ideally, each unsuccessful attempt to construct a retrosynthetic step in Step 3 should return some information on why the step failed: e.g.~if the codomain of a reaction fails to contain $E_i$, then the output should be the codomain and a measure of how far it is from $E_i$. Similarly, if several reactions are found in Step 3, some of which result in products $O$ that do not contain $E_i$, the step should suggest minimal alterations to $E$ such that these reactions do not occur. This can be seen as a {\em deprotection} step: the idea is that in the next iteration the algorithm will attempt to construct (by now a fairly complicated) $E$, but now there is a guarantee this is worth the computational effort, as this prevents the unwanted reactions from occurring ({\em protection} step). Passing such information between the layers would take the full advantage of the layered prop formalism.

\section{Discussion and Future Work}\label{sec:discussion}

The main conceptual contributions of formulating retrosynthesis in layered props are the explicit mathematical descriptions of retrosynthetic steps (Definition~\ref{def:retro-step}) and sequences (Definition~\ref{def:retro-sequence}), which allows for a precise formulation of the entire process, as well as of more fine-grained concepts. While in the current article we showed how to account for the available disconnection rules, reactions and environmental molecules, the general formalism of layered props immediately suggests how to account for other environmental factors (e.g.~temperature and pressure). Namely, these should be represented as posets which control the morphisms that are available between the chemical compounds. One idea for accounting for the available energy is via the disconnection rules: the higher the number of bonds that we are able to break in one step, the more energy is required to be present in the environment.

Apart from modelling retrosynthesis, another potential use of the reaction contexts is to capture context-dependent chemical similarity. While molecular similarity is a major research topic in computational chemistry~\cite{mol-similarity}, the current approaches are based on comparing the molecular structure (connectivity, number of rings etc.) of two compounds, and is therefore bound to ignore the reaction environment. Other advantages of our framework are representation of the protection-deprotection steps, and hard-wiring of chirality into the formalism.

At the level of the formalism, the next step is to model translations between the reaction environments as functors of the form $M\mhyphen\React\rightarrow N\mhyphen\React$. This would allow presenting a retrosynthetic sequence as a single, connected diagram, closely corresponding to actions to be taken in a lab. Similarly, we note that the informal algorithmic description in Section~\ref{sec:retrosynthesis} could be presented internally in a layered prop: Steps 1 and 2 amount to choosing subcategories of $\Disc$ and $\React$.

A theoretical issue that should be addressed in future work is the precise relation between reactions and disconnection rules. As was mentioned when introducing the disconnection rules, we believe that any reaction can be decomposed into a sequence of disconnection rules. This amounts to proving that the translation functor $R$ is full, hence giving a completeness result for reactions with respect to the disconnection rules. In this way, the reactions can be seen as providing semantics for the disconnection rules. This also has a practical significance from the point of view of algorithm design: it would show that all computations, in principle, could be done with just using the disconnection rules.

On the practical side, the crucial next step is to take existing retrosynthesis algorithms and encode them in our framework. This requires implementing the morphisms of the layered prop in the previous section in some software. As the morphisms are represented by string diagrams, one approach is to use proof formalisation software specific to string diagrams and their equational reasoning, such as~\cite{cartographer}. Alternatively, these morphisms could be coded into a programming language like python or Julia. The latter is especially promising, as there is a community writing category-theoretic modules for it~\cite{AlgebraicJulia}. As a lower level description, the disconnection rules and the reactions presented could be encoded in some graph rewriting language, such as Kappa~\cite{kappa-language,kappa2004,krivine-siglog,rewriting-life21}, which is used to model systems of interacting agents, or M\O{}D~\cite{mod-language,intermediate-level,chem-trans-motifs,rewriting-life21}, which represents molecules as labelled graphs and generating rules for chemical transformations as spans of graphs (akin to this work). In order to formally represent reactions as disconnection rules, we need to rewrite string diagrams, the theory for which has been developed in a recent series of articles~\cite{sdrt1,sdrt2,sdrt3}.

\paragraph{Acknowledgements.} F.~Zanasi acknowledges support from \textsc{epsrc} EP/V002376/1.

\bibliographystyle{plain}
\bibliography{bibliography}

\end{document}